\xdef\@endgadget#1{{\unskip\nobreak\hfil\penalty50\hskip1em\hbox{}\nobreak
    \hfil#1\parfillskip=0pt\finalhyphendemerits=0\par}}
\def\@qedsymbol{${}_\blacksquare$}
\def\qed{\@endgadget{\@qedsymbol}}
\newtheorem{lemma}{Lemma}[section]
\newtheorem{theorem}[lemma]{Theorem}
\newtheorem{example}[lemma]{Example}
\newtheorem{definition}[lemma]{Definition}
\newtheorem{proposition}[lemma]{Proposition}
\newtheorem{remark}[lemma]{Remark}
\newcommand{\mR}{\mathbb{R}}
\newcommand{\mL}{\mathbb{L}}
\newcommand{\mP}{\mathbb{P}}
\newcommand{\mZ}{\mathbb{Z}}
\newcommand{\T}{\mathcal{T}}
\newcommand{\Z}{\mathcal{Z}}
\newcommand{\cL}{\mathcal{L}}
\newcommand{\Q}{\mathcal{Q}}
\newcommand{\bq}{\begin{equation}}
\newcommand{\eq}{\end{equation}}
\def\BibTeX{{\rm B\kern-.05em{\sc i\kern-.025em b}\kern-.08em
    T\kern-.1667em\lower.7ex\hbox{E}\kern-.125emX}}
\title{\LARGE \bf Liouville geometry of classical thermodynamics}
\author{Arjan van der Schaft
\thanks{A.J. van der Schaft is with the Bernoulli Institute for Mathematics, Computer
Science and AI, and the Jan C. Willems Center for Systems and Control, University of Groningen, PO Box 407, 9700 AK, the
Netherlands,
        {\tt\small a.j.van.der.schaft@rug.nl}}
}
\begin{document}

\maketitle
\thispagestyle{empty}
\pagestyle{empty}


\section{Introduction}
Starting from Gibbs' fundamental thermodynamic relation, {\it contact geometry} has been recognized as a natural framework for the geometric formulation of classical thermodynamics since the early 1970s \cite{hermann}. This spurred  a series of papers; see e.g. \cite{mrugala1, Mrugala85, mrugala2, Mrugala93, Mrugala00, mrugala3, arnold-gibbs, haslach, eberard, favache, grmela, bravetti15, maschke16, bravetti17, gromov2, ramirez, hudon, simoes, deleon, farantos}, and \cite{bravetti} for a recent introduction and survey. Other geometric work emphasizing the variational formulation of thermodynamics includes \cite{merker,gaybalmaz}.

On the other hand, as discussed in \cite{balian}, the contact-geometric formulation of thermodynamics makes a distinction between the energy and the entropy representation of the same thermodynamic system. By itself this need not be considered as a major flaw since the two representations are conformally equivalent. Nevertheless, it was shown in \cite{balian}, and later in \cite{valparaiso1,valparaiso2,entropy}, that an attractive point of view that is {\it merging} the energy and entropy representation is offered by the extension of contact manifolds to {\it symplectic} manifolds. Compared with the odd-dimensional contact manifold this even-dimensional symplectic manifold has one more degree of freedom, called a gauge variable in \cite{balian}. From a thermodynamics perspective it amounts to replacing the intensive variables by their {\it homogeneous coordinates}. In fact, this {\it symplectization} of contact manifolds is rather well-known in differential geometry \cite{arnold, libermann}; dating back to \cite{herglotz}. As argued in \cite{entropy}, the extension of contact manifolds to symplectic manifolds, in fact to cotangent bundles without zero section, has additional advantages for the geometric formulation of thermodynamics as well. First, it yields a clear distinction between the extensive and intensive variables of the thermodynamic system. Secondly, it enables the definition of {\it port-thermodynamic systems}, which are thermodynamic systems that interact with their environment via either {\it power} or {\it entropy flow} ports. Finally, symplectization has {\it computational} benefits; as was already argued within differential geometry by Arnold \cite{arnold,arnold-contact}.

The present paper aims at providing an in-depth treatment of the resulting geometry of thermodynamic systems, continuing the earlier investigations in \cite{valparaiso1, entropy} and building upon \cite{arnold, arnold-contact, libermann}. Starting point are cotangent bundles without zero section, endowed with their natural one-form; also called the {\it Liouville form}\footnote{Sometimes also called the {\it Poincar\'e-Liouville form}, or tautological form.}. Instead of considering the symplectic geometry derived from the symplectic form $\omega= d \alpha$, where $\alpha$ is the Liouville form, a smaller set of geometric objects will be defined solely based on this Liouville form. The resulting geometry is called {\it Liouville geometry}. In particular, it will be shown how a particular class of Lagrangian submanifolds (called Liouville submanifolds) can be defined as maximal submanifolds on which the Liouville form is zero. Furthermore, a particular type of Hamiltonian vector fields is defined consisting of vector fields which leave the Liouville form invariant. All these geometric objects have the property that they are {\it homogeneous} in the cotangent variables. As a result they are in one-to-one correspondence with objects on the underlying contact manifold (of dimension one less). We will study in detail the generating functions of Liouville submanifolds and the homogeneous Hamiltonian functions of this special type of Hamiltonian vector fields, and relate them to their contact geometry counterparts. Continuing upon \cite{entropy} it will be shown how this leads to the definition of a port-thermodynamic system, and its projection to the contact manifold. Finally we will focus on an {\it additional} homogeneity structure, present in some thermodynamic systems, corresponding to homogeneity in the {\it extensive} variables. This leads to a new geometric view on the classical Gibbs-Duhem relation, and a subsequent projection to an even-dimensional space.

The rest of the paper is structured as follows. In Section 2 it is discussed, using the example of a simple gas, how thermodynamics leads to the study of cotangent bundles over the base space of extensive variables, with cotangent variables being the homogeneous coordinates for the intensive variables. The resulting Liouville geometry of a general cotangent bundle without zero section, and its projection to contact geometry, is studied in Section 3. Then Section 4 provides the definition of port-thermodynamic systems using Liouville geometry, and its projection to a contact-geometric description. Section 5 discusses homogeneity with respect to the extensive variables, the Gibbs-Duhem relation, and its geometric formalization. Finally, Section 6 contains the conclusions.

\section{From thermodynamics to contact and Liouville geometry}
\label{sec:thermo}

In this section we will motivate how classical thermodynamics, starting from Gibbs' thermodynamic relation, naturally leads to contact geometry, and how by considering homogeneous coordinates for the intensive variables this results in Liouville geometry.

\subsection{From Gibbs' fundamental thermodynamic relation to contact geometry}
Consider a simple thermodynamic system such as a mono-phase, single constituent, gas in a confined compartment with volume $V$ and pressure $P$ at temperature $T$. It is well-known that the {\it state properties} of the gas are described by a $2$-dimensional submanifold of the ambient space $\mR^5$ (the {\it thermodynamic phase space}) with coordinates $E$ (energy), $S$ (entropy), $V$, $P$, and $T$. Such a submanifold characterizes the properties of the gas (e.g., an ideal gas, or a Van der Waals gas), and all of them share the following property. Define the Gibbs one-form on the thermodynamic phase space $\mR^5$ as
\bq
\label{Gibbs}
\theta:=dE - TdS +PdV
\eq
Then $\theta$ is {\it zero} restricted to the submanifold characterizing the state properties. This is called {\it Gibbs' fundamental thermodynamic relation}. It implies that the {\it extensive} variables $E,S,V$ and the {\it intensive} variables $T,P$ are related in a specific way. Geometrically this is formalized by noting that the Gibbs one-form $\theta$ defines a {\it contact form} on $\mR^5$, and that any submanifold $L$ capturing the state properties of the thermodynamic system is a submanifold of maximal dimension restricted to which the contact form $\theta$ is zero. Such submanifolds are called {\it Legendre submanifolds} of the {\it contact manifold} $(\mR^5, \theta)$.

By expressing the extensive variable $E$ as a function $E=E(S,V)$ of the two remaining extensive variables $S$ and $V$, Gibbs' fundamental relation implies that 
the Legendre submanifold $L$ specifying the state properties is given as
\bq
\label{L1}
L=\{(E,S,V,T,P) \mid E=E(S,V), T= \frac{\partial E}{\partial S}, -P= \frac{\partial E}{\partial V} \}
\eq
Hence $L$ is completely described by the energy function $E(S,V)$, whence the name {\it energy representation} for \eqref{L1}. On the other hand, there are other ways to represent $L$. If $L$ is parametrizable by the variables $T,V$ (instead of $S,V$ as in \eqref{L1}), then one defines the {\it partial Legendre transform} of $E(S,V)$ with respect to $S$ as
\bq
A(T,V) :=  E(V,S) -TS, \quad T= \frac{\partial E}{\partial S}(S,V),
\eq
where $S$ is solved from $T= \frac{\partial E}{\partial S}(S,V)$. Then $L$ is also described as
\bq
\label{L2}
L=\{(E,S,V,T,P) \mid E=A(T,V) - T \frac{\partial A}{\partial T}, S= - \frac{\partial A}{\partial T}, -P= \frac{\partial A}{\partial V} \}
\eq
$A$ is known as the {\it Helmholtz free energy}, and is one of the thermodynamic potentials derivable from the energy function $E(S,V)$; see e.g. \cite{fermi}. Two other possible parametrizations of $L$ (namely by $S,P$, respectively by $T,P$) correspond to two more thermodynamic potentials, namely the enthalpy $H(S,P)$ and the Gibbs' free energy $G(T,P)$, resulting in similar expressions for $L$. 

In general \cite{arnold, libermann}, a {\it contact manifold} $(M, \theta)$ is an odd-dimensional manifold equipped with a {\it contact form} $\theta$. A one-form $\theta$ on a $(2n+1)$-dimensional manifold $M$ is a contact form if and only around any point in $M$ we can find coordinates $(q_0,q_1, \cdots,q_n, \gamma_1, \cdots, \gamma_n)$ for $M$, called {\it Darboux coordinates}, such that
\bq
\label{darboux}
\theta = dq_0 - \sum_{j=1}^n \gamma_j dq_j
\eq
Equivalently, $\theta$ is a contact form if $\theta \wedge (d \theta)^n$ is nowhere zero on $M$.
A {\it Legendre submanifold} of a contact manifold $(M,\theta)$ is a submanifold of maximal dimension restricted to which the contact form $\theta$ is zero. The dimension of any Legendre submanifold of a $(2n+1)$-dimensional contact manifold is equal to $n$. 
%

In fact, we will use throughout this paper the slightly {\it generalized definition} of a contact manifold as given in e.g. \cite{arnold}, where the contact form $\theta$ {\it is only required to be defined locally}. What counts is the {\it contact distribution}; the $2n$-dimensional subspace of the tangent space at any point of $M$ defined by the {\it kernel} of the contact form $\theta$ at this point. This turns out to be the appropriate concept for the thermodynamic phase space being a contact manifold\footnote{Contact manifolds for which the contact form $\theta$ is defined globally are sometimes called {\it exact contact manifolds}.}.

\medskip

Apart from the above parametrizations of the Legendre submanifold $L$, corresponding to an energy function $E(S,V)$ and its Legendre transforms, there is still {\it another}, although very similar, way of describing $L$. 
This alternative option is motivated from a modeling point of view. Namely, often thermodynamic systems are formulated by first listing the {\it balance laws} for the extensive variables apart from the entropy $S$, and then expressing $S$ as a function $S=S(E,V)$. This leads to the {\it entropy representation} of the submanifold $L \subset \mR^5$, given as
\bq
\label{L3}
L:= \{(E,S,V,T,P) \mid S=S(E,V), \frac{1}{T}= \frac{\partial S}{\partial E}, \frac{P}{T}= \frac{\partial S}{\partial V} \}
\eq
Analogously the case of the energy representation $E=E(S,V)$, one may consider thermodynamic potentials obtained by partial Legendre transform of $S(E,V)$. Geometrically the entropy representation corresponds to the {\it modified} Gibbs contact form
\bq
\label{Gibbs1}
\widetilde{\theta}:=dS -\frac{1}{T}dE - \frac{P}{T}dV,
\eq
which is obtained from the original Gibbs contact form $\theta$ in \eqref{Gibbs} by division by $-T$ (called {\it conformal equivalence}). In this way the Gibbs fundamental relation is rewritten as $\widetilde{\theta}|_L=0$, and the intensive variables become $\frac{1}{T},\frac{P}{T}$.

\subsection{From contact to Liouville geometry}
The contact-geometric view on thermodynamics, directly motivated by Gibbs' fundamental thermodynamic relation, has two shortcomings:\\
(1) Switching from the energy representation $E=E(S,V)$ to the entropy representation $S=S(E,V)$ corresponds to replacing the Gibbs form $\theta$ by the modified Gibbs form $\widetilde{\theta}$ in \eqref{Gibbs1}, and thus leads to a similar, but {\it different}, contact-geometric description.\\
(2) The contact-geometric description does not make a clear distinction between, on the one hand, the extensive variables $E,S,V$ and, on the other hand, the intensive variables $T,-P$ (energy representation), or $\frac{1}{T},\frac{P}{T}$ (entropy representation). In fact, given a contact form $\theta$ there are many Darboux coordinates $q_0,q_1,q_2,p_1,p_2$ for $\mR^5$ such that $\theta=dq_0 - p_1dq_1 - p_2dq_2$, where $q_0,q_1,q_2$ are {\it not} necessarily obtained by a transformation of only the extensive variables $E,S,V$.\\

The way to remedy these shortcomings is to {\it extend} the contact manifold by one extra dimension to a symplectic manifold, in fact a {\it cotangent bundle}, with an additional {\it homogeneity} structure. This construction is rather well-known in differential geometry \cite{arnold, libermann}, but was advocated within a thermodynamics context only in \cite{balian}, and followed up in \cite{valparaiso1, entropy}. For a simple thermodynamic system with extensive variables $E,S,V$ and intensive variables $T,-P$, the construction amounts to replacing the intensive variables $T,-P$ by their {\it homogeneous coordinates} $p_E,p_S,p_V$ with $p_E \neq 0$, i.e., 
\bq
T= \frac{p_S}{-p_E}, \; -P= \frac{p_V}{-p_E}
\eq
Equivalently, the intensive variables $\frac{1}{T}, \frac{P}{T}$ in the {\it entropy representation} are represented as
\bq
\frac{1}{T} = \frac{p_E}{-p_S}, \; \frac{P}{T}= \frac{p_V}{-p_S}
\eq
This means that the {\it two} contact forms $\theta=dE - TdS +PdV$ and $\widetilde{\theta}=dS -\frac{1}{T}dE - \frac{P}{T}dV$ are replaced by a {\it single} symmetric expression, namely by
\bq
\alpha:=p_EdE + p_SdS + p_VdV,
\eq
The one-form $\alpha$ is nothing else than the canonical {\it Liouville one-form} on the cotangent bundle $T^*\mR^3$, with $\mR^3$ the space of extensive variables $E,S,V$. Thus the thermodynamic phase space $\mR^5$ has been replaced by $T^*\mR^3$. More precisely, by definition of homogeneous coordinates the vector $(p_E,p_S,p_V)$ is different from the zero vector, and hence the space with coordinates $E,S,V,p_E,p_S,p_V$ is actually the cotangent bundle $T^*\mR^3$ {\it minus} its zero section; denoted as $\T^*\mR^3$.

Any $2$-dimensional Legendre submanifold $L \subset \mR^5$ describing the state properties is now replaced by a $3$-dimensional submanifold $\cL \subset \T^*\mR^3$, given as
\bq
\cL=\{(E,S,V,p_E,p_S,p_V) \in \T^*\mR^3 \mid (E,S,V, \frac{p_S}{-p_E}, \frac{p_V}{-p_E}) \in L  \}
\eq
It turns out that $\cL$ is a {\it Lagrangian submanifold} of $\T^*\mR^3$ with symplectic form $\omega:=d\alpha$, with an additional property of {\it homogeneity}. Namely, whenever $(E,S,V,p_E,p_S,p_V) \in \cL$, then also $(E,S,V,\lambda p_E,\lambda p_S, \lambda p_V) \in \cL$, for any non-zero $\lambda \in \mR$. Such Lagrangian submanifolds turn out to be fully characterized as maximal manifolds restricted to which the Liouville one-form $\alpha=p_EdE + p_SdS + p_VdV$ is zero, and will thus be called {\it Liouville submanifolds} of $\T^*\mR^3$.
As we will see in the next section the extension of contact manifolds to cotangent bundles, replacing the intensive variables by their homogeneous coordinates, also leads to a natural homogeneous Hamiltonian dynamics on the extended space $\T^*\mR^3$. This does not only facilitate the analysis, but has clear computational advantages as well. In fact, all computations become standard operations on cotangent bundles and in Hamiltonian dynamics. In the words of Arnold \cite{arnold-contact}: one is advised to calculate symplectically (but to think rather in terms of contact geometry). 

\medskip

All of this is immediately extended from the thermodynamic phase space $\mR^5$ with coordinates $E,S,V,T,P$ to {\it general} thermodynamic phase spaces. For instance, in the case of multiple chemical species the Gibbs form $\theta$ extends to $dE - TdS +PdV - \sum_k \mu_kdN_k$, where $N_k$ and $\mu_k$, $k=1, \cdots,s,$ are the mole numbers, respectively, chemical potentials of the $k$-th species. Correspondingly, the thermodynamic phase $\mR^5 \times \mR^{2s}$ is replaced by the cotangent bundle without zero-section $\T^*\mR^{3 + s}$, with extensive variables $E,S,V,N_1, \cdots, N_s$ and Liouville form
\bq
p_EdE + p_SdS + p_VdV + p_1dN_1 + \cdots + p_sdN_s,
\eq
where $\mu_1 =\frac{p_1}{-p_E}, \cdots, \mu_s =\frac{p_s}{-p_E}$.

\section{Liouville geometry}

This section is concerned with the general definition and analysis of geometric objects on the cotangent bundle without zero section, which project to the underlying contact manifold. Since everything is based on the Liouville form this will be called {\it Liouville geometry}. In particular, we will deal with Liouville submanifolds and homogeneous Hamiltonian dynamics.

\subsection{Cotangent bundles and the canonical contact manifold}
In the previous section it was indicated how the thermodynamic phase space can be extended to a cotangent bundle, without its zero section, by the use of homogeneous coordinates for the intensive variables. Furthermore, it was shown how in this way the energy and entropy representation are unified, and how this provides a geometric definition of extensive and intensive variables. Conversely, in this subsection we will {\it start} with a general cotangent bundle without zero section, and show how this leads to the {\it canonical contact manifold} serving as thermodynamic phase space.

Consider a thermodynamic system with total space of extensive variables, including energy $E$ and entropy $S$, given by the manifold $\Q$. Then consider the cotangent bundle $\T^*\Q$ without its zero section. The {\it Liouville one-form} $\alpha$ on $\T^*\Q$ is defined as follows.
Consider $\eta \in \T^*\Q, X \in T_{\eta}\T^*\Q$, and define
\bq
\alpha_{\eta}(X) := \eta (\mathrm{pr}_*X),
\eq
where $\mathrm{pr}:\T^*\Q \to \Q$ is the bundle projection. Then $\omega:=d\alpha$, with $d$ exterior derivative, is the canonical {\it symplectic form} on $\T^*\Q$. Furthermore, the {\it Euler vector field} $Z$ is defined as the unique vector field satisfying
\bq
\label{Z}
d\alpha(Z, \cdot) = \alpha
\eq
This also implies $\mL_Z \alpha = \alpha$, with $\mL$ denoting Lie derivative.

In coordinates $\alpha, \omega$ and $Z$ take the following simple form. Let $\dim \Q=n+1$, with local coordinates $q_0,\cdots,q_n$, and let $p_0,\cdots,p_n$ be the corresponding coordinates for the cotangent spaces $T_q^*\Q$. Then 
\bq
\alpha = \sum_{i=0}^{n}p_idq_i, \quad \omega= \sum_{i=0}^{n}dp_i \wedge dq_i, \quad Z= \sum_{i=0}^{n}p_i \frac{\partial}{\partial p_i}
\eq
Based on $\T^*\Q$ we may define a {\it canonical contact manifold} in the following way \cite{arnold}. For each $q\in \Q$ and each cotangent space $T^*_q\Q$ consider the {\it projective space} $\mP(T^*_q\Q)$, given as the set of rays in $T^*_q\Q$, that is, all the non-zero multiples of a non-zero cotangent vector. Thus the projective space $\mP(T^*_q\Q)$ has dimension $n$, and there is a canonical projection $\pi_q: \T^*_q\Q \to \mP(T^*_q\Q)$, where $\T^*_q\Q$ denotes the cotangent space without its zero vector.  
The fiber bundle of the projective spaces $\mP(T^*_q\Q)$, $q \in \Q$, over the base manifold $\Q$ will be denoted by $\mP(T^*\Q)$. Furthermore, denote the bundle projection obtained by considering $\pi_q: \T^*_q\Q \to \mP(T^*_q\Q)$ for every $q\in\Q$ by $\pi: \T^*\Q \to \mP(T^*\Q)$. 

As detailed in \cite{arnold,arnold-contact,entropy,valparaiso1}, $\mP(T^*\Q)$ defines a {\it canonical}\footnote{In the sense that any other $(2n+1)$-dimensional contact manifold is locally {\it contactomorphic} to $\mP(T^*\Q)$ \cite{arnold,libermann}.} contact manifold of dimension $2n+1$. The contact manifold $\mP(T^*\Q)$ will serve as the {\it thermodynamic phase space} for the thermodynamic system with space of external variables $\Q$. 

Given natural coordinates $q_0,\cdots,q_n, p_0,\cdots,p_n$ for $\T^*\Q$, we may select different sets of local coordinates for $\mP(T^*\Q)$ and corresponding different expressions of the projection $\pi: \T^*_q\Q \to \mP(T^*_q\Q)$. In fact, whenever $p_0 \neq 0$ we may express the projection $\pi_q: \T^*_q\Q \to \mP(T^*_q\Q)$ by the map 
\bq
(p_0,p_1, \cdots, p_n) \mapsto (\gamma_1,\cdots,\gamma_n)
\eq
where
\bq
\gamma_1=\frac{p_1}{-p_0}, \cdots, \gamma_n=\frac{p_n}{-p_0}
\eq
This means that
\bq
\alpha = p_0dq_0 + p_1dq_1 + \cdots + p_ndq_n=p_0\big(dq_0 - \gamma_1dq_1 \cdots - \gamma_ndq_n \big) =: p_0\theta,
\eq
with $\theta$ a locally defined contact form on $\mP(T^*\Q)$. Clearly, the same can be done for any of the other coordinates $p_i$, defining different contact forms. For example, if $p_1 \neq 0$ we may express $\pi_q: \T^*_q\Q \to \mP(T^*_q\Q)$ also by the map 
\bq
(p_0,p_1, \cdots, p_n) \mapsto (\widetilde{\gamma}_0, \widetilde{\gamma}_2, \cdots,\widetilde{\gamma}_n),
\eq
where
\bq
\widetilde{\gamma}_0=\frac{p_0}{-p_1}, \widetilde{\gamma}_2=\frac{p_2}{-p_1}, \cdots, \widetilde{\gamma}_n=\frac{p_n}{-p_1},
\eq
so that 
\bq
\alpha = p_1\big(dq_1 - \widetilde{\gamma}_0dq_0 - \widetilde{\gamma}_2dq_2 \cdots - \widetilde{\gamma}_ndq_n \big) =: p_1\widetilde{\theta}
\eq
In the thermodynamics context of Section \ref{sec:thermo}, with $q_0=E,q_1=S$, and thus $p_0=p_E,p_1=p_S$, the first option corresponds to the {\it energy representation} and the second to the {\it entropy representation}.


Importantly, there is a direct {\it correspondence} between all geometric objects (functions, Legendre submanifolds, vector fields) on the contact manifold $\mP(T^*\Q)$ with the same objects on $\T^*\Q$ endowed with an additional homogeneity property in the $p$ variables. A key element in this is Euler's theorem on homogeneous functions; see e.g. \cite{entropy}. 
\begin{definition}
Let $r \in \mZ$. A function $K: \T^*\Q \to \mR$ is called homogeneous of degree $r$ in $p$ if
\bq
K(q, \lambda p) = \lambda^r K(q, p), \quad \mbox{ for all } \lambda \neq 0
\eq
\end{definition}
\begin{theorem}[Euler's homogeneous function theorem]\label{Euler}
A differentiable function $K: \T^*\Q \to \mR$ is homogeneous of degree $r$ in $p$ if and only if
\bq
\sum_{i=0}^{n} p_i\frac{\partial K}{\partial p_i}(q,p)= rK(q,p), \quad \mbox{ for all } (q,p) \in \T^*\Q
\eq
Moreover, if $K$ is homogeneous of degree $r$ in $p$, then all its derivatives \\$\frac{\partial K}{\partial p_i}(q,p), i=0,1,\cdots,n,$ are homogeneous of degree $r-1$ in $p$.

Furthemore $K: \T^*\Q \to \mR$ is homogeneous of degree $0$ in $p$ if and only if $\mL_ZK=0$, and homogeneous of degree $1$ in $p$ if and only if $\mL_ZK=K$, where $Z$ is the Euler vector field and $\mL$ denotes Lie derivation.
\end{theorem}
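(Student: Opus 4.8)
The plan is to reduce the whole statement to elementary one-variable calculus performed \emph{along the rays} $\lambda\mapsto(q,\lambda p)$ in the $p$-fibre; the manifold structure of $\T^*\Q$ then plays no role, and everything becomes a computation in the single real variable $\lambda$. The substantive part is the biconditional between the scaling law and the first-order identity $\sum_i p_i\,\partial K/\partial p_i = rK$; once that is available, the ``Moreover'' clause and the two Lie-derivative characterizations drop out almost immediately.

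For the ``only if'' direction, I would fix $(q,p)\in\T^*\Q$ and differentiate $K(q,\lambda p)=\lambda^r K(q,p)$ with respect to $\lambda$. The left-hand side, by the chain rule with $q$ held fixed, is $\sum_{i=0}^n p_i\,(\partial K/\partial p_i)(q,\lambda p)$, while the right-hand side is $r\lambda^{r-1}K(q,p)$; setting $\lambda=1$ gives Euler's identity at $(q,p)$, and $(q,p)$ was arbitrary. For the ``if'' direction, assume Euler's identity holds everywhere, fix $(q,p)$ with $p\neq 0$, and on the interval $\lambda>0$ put $\phi(\lambda):=\lambda^{-r}K(q,\lambda p)$. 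Differentiating and then inserting Euler's identity \emph{evaluated at the point $(q,\lambda p)$} — i.e. $\sum_i(\lambda p_i)(\partial K/\partial p_i)(q,\lambda p)=rK(q,\lambda p)$ — makes the two terms of $\phi'(\lambda)$ cancel, so $\phi'\equiv 0$; hence $\phi(\lambda)=\phi(1)=K(q,p)$, i.e. $K(q,\lambda p)=\lambda^r K(q,p)$ for $\lambda>0$.

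The ``Moreover'' clause then follows by differentiating the now-established relation with respect to $p_j$: the chain rule gives $\lambda(\partial K/\partial p_j)(q,\lambda p)=\lambda^r(\partial K/\partial p_j)(q,p)$, which is exactly homogeneity of degree $r-1$ of $\partial K/\partial p_j$. For the last two equivalences, note that in the natural coordinates the Euler vector field is $Z=\sum_i p_i\,\partial/\partial p_i$, so $\mL_Z K=Z(K)=\sum_i p_i\,\partial K/\partial p_i$; the conditions $\mL_Z K=0$ and $\mL_Z K=K$ are therefore \emph{verbatim} Euler's identity with $r=0$ and $r=1$, and the claims are the corresponding two instances of the biconditional just proved.

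The only place where genuine care is needed is the passage from $\lambda>0$ to all $\lambda\neq 0$: the scaling group $\mR\setminus\{0\}$ has two components, and running the same ODE on the ray $\{(q,\lambda p):\lambda<0\}$ only yields $K(q,\lambda p)=|\lambda|^r K(q,-p)$, so matching this with $\lambda^r K(q,p)=(-1)^r|\lambda|^r K(q,p)$ additionally needs the parity relation $K(q,-p)=(-1)^rK(q,p)$. This holds automatically for the functions of interest here — those meant to descend to the contact manifold $\mP(T^*\Q)$, whose points are full rays — but it is not a consequence of Euler's identity alone, so to have the biconditional literally for every $\lambda\neq 0$ one should read degree-$r$ homogeneity as including this parity (equivalently, restrict the integration to $\lambda>0$, which is all the subsequent sections use). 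I expect this two-component bookkeeping, rather than any of the calculus, to be the one subtle point; everything else is routine.
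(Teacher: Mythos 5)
The paper does not actually prove this theorem: it is quoted as Euler's classical result (``see e.g.\ \cite{entropy}'') and used as a black box, so there is no in-paper argument to compare yours against. Your proof is the standard one and is correct as far as it goes --- differentiating $K(q,\lambda p)=\lambda^rK(q,p)$ along the ray and setting $\lambda=1$ for the forward direction, the constancy of $\phi(\lambda)=\lambda^{-r}K(q,\lambda p)$ for the converse, differentiation in $p_j$ for the ``Moreover'' clause, and the identification $\mL_ZK=Z(K)=\sum_i p_i\,\partial K/\partial p_i$ in the natural fibre coordinates for the final two equivalences. Your closing caveat is not pedantry but a genuine (if minor) imprecision in the statement as printed: since the paper's Definition requires $K(q,\lambda p)=\lambda^rK(q,p)$ for \emph{all} $\lambda\neq 0$, the ``if'' direction is literally false without an extra parity hypothesis --- e.g.\ $K(q,p)=\|p\|$ on $\T^*\Q$ satisfies $\sum_i p_i\,\partial K/\partial p_i=K$ yet $K(q,-p)\neq -K(q,p)$ --- so one must either read homogeneity as positive homogeneity or add the condition $K(q,-p)=(-1)^rK(q,p)$, which does hold for all the functions the paper subsequently uses since they are built to descend to the projective bundle $\mP(T^*\Q)$.
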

Since until Section \ref{sec:homex} homogeneity will always refer to homogeneity in the $p$-variables we will often simply talk about 'homogeneity'.

Obviously, functions $K: \T^*\Q \to \mR$ which are homogeneous of degree $0$ in $p$ are those functions which project under $\pi$ to functions on $\mP(T^*\Q)$, i.e., $K=\pi^*\widehat{K}$ with $\widehat{K}: \mP(T^*\Q) \to \mR$. In the next two subsections we will consider two more classes of objects which project to $\mP(T^*\Q)$.

\subsection{Liouville submanifolds}

Legendre submanifolds of the canonical thermodynamic phase space $\mP(T^*\Q)$ are in one-to-one correspondence with Liouville submanifolds\footnote{Previously called {\it homogeneous Lagrangian submanifolds} in \cite{entropy}.} of $\T^*\Q$, defined as follows.
\begin{definition} 
A submanifold $\cL \subset \T^*\Q$ is called a {\it Liouville submanifold} if the Liouville form $\alpha$ restricted to $\cL$ is zero and $\dim \cL= \dim \Q$.
\end{definition}
Recall that $\cL$ is a Lagrangian submanifold of $\T^*\Q$ if $\omega=d\alpha$ is zero on $\cL$ and $\dim \cL= \dim \Q$ (or, equivalently, $\omega$ is zero on $\cL$ and $\cL$ is maximal with respect to this property.) The following proposition shows that Liouville submanifolds are actually Lagrangian submanifolds of $\T^*\Q$ with an additional homogeneity property.  
\begin{proposition}
\label{prop:liouville}
$\cL \subset \T^*\Q$ is a Liouville submanifold if and only if $\cL$ is a Lagrangian submanifold of the symplectic manifold $(\T^*\Q, \omega)$ with the property that
\bq
\label{Liouville}
(q,p) \in \cL \Rightarrow (q,\lambda p) \in \cL
\eq
for every $0\neq \lambda \in \mR$.
\end{proposition}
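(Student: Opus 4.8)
## Proof Proposal

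The plan is to prove both implications by exploiting the relation $d\alpha(Z,\cdot)=\alpha$ together with the fact that the scaling $(q,p)\mapsto(q,\lambda p)$ is precisely the flow of the Euler vector field $Z$ (up to reparametrization, since its time-$t$ flow is $(q,p)\mapsto(q,e^tp)$, and for negative $\lambda$ one composes with the central symmetry $p\mapsto -p$, which also preserves $\alpha$ and $\omega$). The key structural observation I would isolate first is the following: for any submanifold $N$ on which $\alpha$ vanishes identically, $\alpha|_N=0$ forces $d(\alpha|_N)=0$, hence $\omega|_N=(d\alpha)|_N=0$ automatically. Combined with $\dim\cL=\dim\Q=\tfrac12\dim\T^*\Q$, this shows a Liouville submanifold is in particular Lagrangian, so the only content on that side of the equivalence is the homogeneity property \eqref{Liouville}.

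For the forward direction, suppose $\cL$ is a Liouville submanifold. Lagrangian-ness is handled by the observation above. For homogeneity, I would argue that $Z$ is everywhere tangent to $\cL$: at a point $\eta\in\cL$, evaluate $\alpha_\eta$ on an arbitrary $X\in T_\eta\cL$; since $\omega_\eta(Z,X)=\alpha_\eta(X)=0$ (using $d\alpha(Z,\cdot)=\alpha$ and $\alpha|_\cL=0$), the vector $Z_\eta$ is $\omega$-orthogonal to $T_\eta\cL$; but $T_\eta\cL$ is Lagrangian, hence equal to its own $\omega$-orthogonal complement, so $Z_\eta\in T_\eta\cL$. Therefore $\cL$ is invariant under the flow of $Z$, i.e. under $(q,p)\mapsto(q,e^tp)$ for all $t\in\mR$, which gives \eqref{Liouville} for all $\lambda>0$; the case $\lambda<0$ follows because $(q,p)\mapsto(q,-p)$ is an antisymplectic involution preserving $\alpha$ and therefore maps Liouville submanifolds to Liouville submanifolds — or, more directly in the coordinate picture, because the defining equations of $\cL$ coming from a contact Legendre submanifold are manifestly invariant under all nonzero rescalings of $p$.

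For the converse, suppose $\cL$ is Lagrangian and satisfies \eqref{Liouville}. The dimension condition $\dim\cL=\dim\Q$ is immediate, so I only need $\alpha|_\cL=0$. The homogeneity assumption means exactly that $\cL$ is invariant under the flow of $Z$ (the positive rescalings generate this flow, and the dimension of $\cL$ is constant along it), hence $Z$ is tangent to $\cL$ at every point. Now take $\eta\in\cL$ and $X\in T_\eta\cL$: then $\alpha_\eta(X)=(d\alpha)_\eta(Z_\eta,X)=\omega_\eta(Z_\eta,X)$, and since both $Z_\eta$ and $X$ lie in the Lagrangian subspace $T_\eta\cL$, this vanishes. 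Hence $\alpha|_\cL=0$, so $\cL$ is a Liouville submanifold.

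I do not expect a serious obstacle here; the proof is essentially a packaging of the identity $\iota_Z\omega=\alpha$ plus the self-orthogonality of Lagrangian subspaces. The one point deserving a careful sentence rather than a glib one is the step ``$\cL$ is invariant under \eqref{Liouville}'' $\Leftrightarrow$ ``$Z$ is tangent to $\cL$'': the implication from tangency of $Z$ to invariance under all $\lambda>0$ is the standard fact that a vector field tangent to a submanifold has its flow preserving that submanifold, and the reverse implication is differentiation of the curve $\lambda\mapsto(q,\lambda p)$ at $\lambda=1$; the extension to $\lambda<0$ should be dispatched explicitly via the involution $p\mapsto-p$ (or by noting $\cL$ is closed and connected components are preserved), since the Euler flow alone only reaches positive scalars.
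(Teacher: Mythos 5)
Your proof is correct and follows essentially the same route as the paper's: both directions reduce to the identity $i_Z\omega=\alpha$ plus self-orthogonality (equivalently, maximality) of Lagrangian subspaces, with the scaling property \eqref{Liouville} translated into tangency of the Euler vector field $Z$. The only cosmetic difference is that you obtain $\omega|_{\cL}=0$ from naturality of $d$ ($i^*d\alpha=d(i^*\alpha)=0$), whereas the paper gets the same conclusion by applying Palais' formula to vector fields tangent to $\cL$; these are the same computation. One remark on the single delicate point you flag: tangency of $Z$ only yields invariance under $\lambda>0$, and your proposed repair via the involution $(q,p)\mapsto(q,-p)$ shows that the \emph{image} of $\cL$ is again a Liouville submanifold, not that it coincides with $\cL$ — so as stated it does not close the $\lambda<0$ case (indeed, a half-fiber such as $\{(q_0,p):p>0\}\subset\T^*\mR$ satisfies the definition of a Liouville submanifold yet is not invariant under $p\mapsto-p$, so some implicit closedness or maximality assumption on $\cL$ is really needed). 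The paper elides this entirely by simply declaring \eqref{Liouville} equivalent to $Z$-tangency, so your treatment is, if anything, the more careful of the two.
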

\begin{proof} First of all note that the homogeneity property \eqref{Liouville} is equivalent to {\it tangency} of the Euler vector field $Z$ to $\cL$.\\
(Only if) By Palais' formula (see e.g. \cite{abraham}, Proposition 2.4.15)
\bq
d \alpha (X_1,X_2) = \mL_{X_1}(\alpha (X_2)) -  \mL_{X_2}(\alpha (X_1)) - \alpha \left( [X_1,X_2] \right)
\eq
for any two vector fields $X_1,X_2$. Hence, for any $X_1,X_2$ tangent to $\cL$ we obtain $d \alpha (X_1,X_2)=0$, implying that $\cL$ is a Lagrangian submanifold. Furthermore, by \eqref{Z}
\bq
\label{alpha}
d\alpha(Z,X)=\alpha (X)=0,
\eq
for all vector fields $X$ tangent to $\cL$. Because $\cL$ is a Lagrangian submanifold this implies that $Z$ is tangent to $\cL$ (since a Lagrangian submanifold is a {\it maximal} submanifold restricted to which $\omega=d \alpha$ is zero.) \\
(If). If $\cL$ is Lagrangian and satisfies \eqref{Liouville}, then $Z$ is tangent to $\cL$, and thus \eqref{alpha} holds for all vector fields $X$ tangent to $\cL$, implying that $\alpha$ is zero restricted to $\cL$.
\end{proof}
\begin{remark}
It also follows that $\cL \subset \T^*\Q$ is a Liouville submanifold if and only if it is a {\it maximal} submanifold on which $\alpha$ is zero.
\end{remark}
Liouville submanifolds of $\T^*\Q$ are in one-to-one correspondence with {\it Legendre submanifolds} of the canonical contact manifold $\mP(T^*\Q)$.
Recall that a submanifold of a $(2n+1)$-dimensional contact manifold is a Legendre submanifold \cite{arnold, libermann} if the locally defined contact form $\theta$ is zero restricted to it, and its dimension is equal to $n$ (the maximal dimension of a submanifold on which $\theta$ is zero). 

\begin{proposition}[\cite{libermann}, Proposition 10.16, \cite{entropy}]
Consider the projection $\pi: \T^*\Q \to \mP (T^*\Q)$. Then $\widehat{\cL} \subset \mP (T^*\Q)$ is a Legendre submanifold if and only if $\cL:=\pi^{-1}(\widehat{\cL}) \subset \T^*\Q$ is a Liouville submanifold. Conversely, any Liouville submanifold $\cL \subset \T^*\Q$ is of the form $\pi^{-1}(\widehat{\cL})$ for some Legendre submanifold $\widehat{\cL}$.
\end{proposition}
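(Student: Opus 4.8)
The plan is to establish the correspondence by working locally in natural coordinates $q_0,\dots,q_n,p_0,\dots,p_n$ on $\T^*\Q$, using the characterization of Liouville submanifolds from Proposition \ref{prop:liouville} — namely that $\cL$ is Lagrangian, $\dim \cL = \dim \Q = n+1$, and $Z$ is tangent to $\cL$ (equivalently $\alpha|_\cL = 0$). The map $\pi$ is a submersion with one-dimensional fibers (the rays $\lambda p$, $\lambda \neq 0$), so preimages under $\pi$ and images under $\pi$ of submanifolds behave well once we track dimensions and the $\mR^\times$-action.

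First I would prove the "if" direction: assume $\widehat{\cL} \subset \mP(T^*\Q)$ is a Legendre submanifold, so $\dim \widehat{\cL} = n$ and $\theta|_{\widehat{\cL}} = 0$ for the locally defined contact form $\theta$. Since $\pi$ is a submersion, $\cL := \pi^{-1}(\widehat{\cL})$ is a submanifold of dimension $n+1 = \dim \Q$. It is automatically invariant under $p \mapsto \lambda p$ since the fibers of $\pi$ are exactly the rays, so $Z$ is tangent to $\cL$. It remains to check $\alpha|_\cL = 0$: on the chart where $p_0 \neq 0$ we have $\alpha = p_0\,\pi^*\theta$ (the displayed identity $\alpha = p_0\theta$ in the excerpt), so for any $X$ tangent to $\cL$, $\alpha(X) = p_0\,(\pi^*\theta)(X) = p_0\,\theta(\pi_* X)$, and $\pi_* X$ is tangent to $\widehat{\cL}$ on which $\theta$ vanishes; hence $\alpha(X) = 0$. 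The same argument works on any chart $p_i \neq 0$ with the corresponding contact form, and these charts cover $\T^*\Q$, so $\alpha|_\cL = 0$ everywhere. By Proposition \ref{prop:liouville} (or the Remark), $\cL$ is a Liouville submanifold.

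Next the "only if" and the converse, which I would handle together: let $\cL \subset \T^*\Q$ be a Liouville submanifold. Since $Z$ is tangent to $\cL$ and $Z$ generates the $\mR^\times$-action $p \mapsto \lambda p$ (its flow is $p \mapsto e^t p$, giving all positive $\lambda$; negative $\lambda$ come from $(q,p)\in\cL \Rightarrow (q,-p)\in\cL$, which follows from connectedness of the fiber component or can be taken as part of the definition of the action as in \eqref{Liouville}), $\cL$ is a union of full rays. Therefore $\widehat{\cL} := \pi(\cL)$ is well-defined as a set, and because $\pi$ restricted to $\cL$ is a submersion onto its image with one-dimensional fibers (the rays, along which $\cL$ is tangent to $Z \neq 0$), $\widehat{\cL}$ is a submanifold of dimension $n$. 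Moreover $\pi^{-1}(\widehat{\cL}) = \cL$ precisely because $\cL$ is ray-invariant. Finally $\theta|_{\widehat{\cL}} = 0$: for $\widehat{X}$ tangent to $\widehat{\cL}$ at a point $\pi(q,p)$, lift it to $X$ tangent to $\cL$ at $(q,p)$ with $p_0 \neq 0$ (choosing the appropriate chart); then $\theta(\widehat{X}) = \theta(\pi_* X) = (\pi^*\theta)(X) = \tfrac{1}{p_0}\alpha(X) = 0$. Since $\dim\widehat{\cL} = n$, it is a Legendre submanifold, and $\cL = \pi^{-1}(\widehat{\cL})$, which also proves the converse statement.

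The main obstacle is not any single computation but the careful bookkeeping around the submersion $\pi$: verifying that ray-invariance (tangency of $Z$) is exactly what makes $\pi^{-1}\pi(\cL) = \cL$ and that the quotient $\pi(\cL)$ is a genuine embedded submanifold of the right dimension rather than merely an immersed image — this uses that $\cL$ is $Z$-invariant and that $Z$ is nonvanishing, so locally $\cL$ is a product (ray) $\times$ (an $n$-dimensional slice transverse to the rays), and $\pi$ identifies $\cL$ with that slice. One should also note that the statement, following \cite{libermann, entropy}, is essentially the classical fact that Legendre submanifolds of the projectivized cotangent bundle are the projectivizations of conic Lagrangian submanifolds; the proof above is the direct coordinate-aware verification of that fact in the present setting.
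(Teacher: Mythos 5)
Your proof is correct in substance, but note that the paper itself gives no proof of this proposition at all: it is stated as a citation to Libermann--Marle (Proposition 10.16) and to \cite{entropy}, so there is no in-paper argument to compare against. What you supply is the standard direct verification, and both directions are handled properly: the preimage under the submersion $\pi$ has dimension $n+1=\dim\Q$, ray-invariance gives tangency of $Z$, and the local identity $\alpha = p_0\,\pi^*\theta$ on the chart $p_0\neq 0$ (together with its analogues on the charts $p_i\neq 0$, which cover $\T^*\Q$) transfers the vanishing of $\theta$ on $\widehat{\cL}$ to the vanishing of $\alpha$ on $\cL$ and back. The one point where your argument is looser than it should be is the passage from tangency of $Z$ to invariance under \emph{all} $\lambda\neq 0$: the fiber $\{\lambda p \mid \lambda\neq 0\}$ of $\pi$ has two connected components, so ``connectedness of the fiber component'' does not by itself yield $(q,p)\in\cL\Rightarrow(q,-p)\in\cL$, and the flow of $Z$ only produces positive scalings (and, for a non-closed $\cL$, not even all of those). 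Strictly speaking the converse direction needs $\cL$ to be saturated under the full $\mR\setminus\{0\}$ action, which is exactly the homogeneity property \eqref{Liouville}; the paper makes the same silent identification in the proof of Proposition \ref{prop:liouville}, so you are consistent with its conventions, but you should either invoke \eqref{Liouville} directly (rather than tangency of $Z$) or restrict attention to maximal, saturated Liouville submanifolds. With that caveat made explicit, your bookkeeping on why $\pi(\cL)$ is an embedded $n$-dimensional submanifold (local product of a ray with a transverse slice) is exactly the right way to fill in what the cited references prove.
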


This implies as well a one-to-one correspondence between {\it generating functions} of Legendre submanifolds $\widehat{\cL} \subset \mP (T^*\Q)$ and generating functions of Liouville submanifolds $\cL \subset \T^*\Q$ with $\pi^{-1}(\widehat{\cL})$.
Recall from \cite{libermann, arnold} that any Legendre submanifold $\widehat{\cL} \subset \mP (T^*\Q)$ with Darboux coordinates $q_0,q_1,\cdots,q_n,\gamma_1,\cdots, \gamma_n$ can be represented as
\bq
\label{20}
\widehat{\cL} = \{(q_0,q_1,\cdots,q_n, \gamma_1, \cdots, \gamma_n) \mid q_0=\widehat{F} - \gamma_J\frac{\partial \widehat{F}}{\partial \gamma_J}, 
\, q_J = - \frac{\partial \widehat{F}}{\partial \gamma_J}, \, \gamma_I = \frac{\partial \widehat{F}}{\partial q_I}\}
\eq
for some disjoint partitioning $I \cup J=\{1, \cdots, n \}$ and some function $\widehat{F}(q_I,\gamma_J)$, called a {\it generating function} for $\widehat{\cL}$. Here $\gamma_J$ is the vector with elements $\gamma_\ell=\frac{p_\ell}{-p_0}, \ell \in J$, and $\gamma_J \frac{\partial \widehat{F}}{\partial \gamma_J}$ is shorthand notation for $\sum_{\ell \in J}\gamma_\ell \frac{\partial \widehat{F}}{\partial \gamma_\ell}$. Conversely any submanifold $\widehat{\cL}$ as given in \eqref{20}, for any partitioning $I \cup J=\{1, \cdots, n \}$ and function $\widehat{F}(q_I,\gamma_J)$, is a Legendre submanifold. This implies that the corresponding Liouville submanifold $\cL=\pi^{-1}(\widehat{\cL})$ is given as
\bq
\label{21}
\cL= \{(q_0,\cdots,q_n,p_0,\cdots,p_n) \mid 
q_0=- \frac{\partial F}{\partial p_0}, \, q_J = - \frac{\partial F}{\partial p_J}, \, p_I = \frac{\partial F}{\partial q_I}\},
\eq
where 
\bq
\label{22}
F(q_I,p_0,p_J) :=-p_0\widehat{F}(q_I, \frac{p_J}{-p_0})
\eq
This is immediately verified by exploiting the identities
\bq
\label{identities}
\begin{array}{l}
- \frac{\partial F}{\partial p_0} = \widehat{F}(q_I, -\frac{p_J}{p_0}) + p_0 \frac{\partial \widehat{F}}{\partial \gamma_J}(q_I, -\frac{p_J}{p_0})\frac{p_J}{p_0^2} = \widehat{F}(q_I, \gamma_J) - \gamma_J \frac{\partial \widehat{F}}{\partial \gamma_J} \\[2mm]
\frac{\partial F}{\partial p_J}= -p_0\frac{\partial \widehat{F}}{\partial \gamma_J}\cdot\frac{1}{-p_0}=\frac{\partial \widehat{F}}{\partial \gamma_J}, \quad 
\frac{\partial F}{\partial q_I} = -p_0\frac{\partial \widehat{F}}{\partial q_I}= -p_0\gamma_I=p_I
\end{array}
\eq

Thus $F(q_I,p_0,p_J)$ is a generating function of $\cL$. 
Conversely, any Liouville submanifold as in \eqref{21} for some $p_0$ (possibly after renumbering the index set $\{0,1,\cdots,n\}$) and generating function $F$ as given in \eqref{22} for some $\widehat{F}(q_I,\gamma_J)$, with $I \cup J=\{1, \cdots, n \}$ and $\gamma_J= -\frac{p_J}{p_0}$ defines a Liouville submanifold of $\T^*\Q$.

Note that the generating function $F(q_I,p_0,p_J)=-p_0\widehat{F}(q_I, \frac{p_J}{-p_0})$ as in \eqref{22} for the Liouville submanifold $\cL$ is {\it homogeneous of degree} $1$ in $p$. The correspondence \eqref{22} between the generating function $F(q_I,p_0,p_J)$ of the Liouville submanifold $\cL=\pi^{-1}(\widehat{\cL})$ and the generating function $\widehat{F}(q_I, \gamma_J)$ of the Legendre submanifold $\widehat{\cL}$ is of a well-known type in the theory of homogeneous functions. Indeed, for any function $K(q,p)$ that is homogeneous of degree $1$ in $p$, we can define
\bq
\label{hom1}
\widehat{K}(q,\gamma_1, \cdots,\gamma_n):= K(q,-1, \gamma_1, \cdots,\gamma_n),
\eq
implying that
\bq
\label{hom2}
K(q,p_0,p_1,\cdots,p_n)= -p_0\widehat{K}(q,\frac{p_1}{-p_0}, \cdots, \frac{p_n}{-p_0} )
\eq
Finally note that the correspondence between the Liouville submanifold $\cL$ and the Legendre submanifold $\widehat{\cL}$ and their generating functions can be obtained for {\it any numbering} of the set $\{0,1,\cdots,n\}$, and thus for any choice of $p_0$. This provides other coordinatizations of the {\it same} Legendre submanifold $\widehat{\cL} \subset \mP (T^*\Q)$. The representation of $\widehat{\cL}$ either in energy or in entropy representation is an example of this.

\subsection{Homogeneous Hamiltonian and contact vector fields}
For any function $K: \T^*\Q \to \mR$ the \emph{Hamiltonian vector field} $X_K$ on $\T^*\Q$ is defined by the standard Hamiltonian equations
\bq
\label{ham}
\dot{q}_i = \frac{\partial K}{\partial p_i}(q,p), \quad \dot{p}_i = - \frac{\partial K}{\partial q_i}(q,p), \quad i=0,1\cdots, n,
\eq
or equivalently, $\omega (X_K,-)= - dK$.
Note that since $d\alpha (Z,\cdot)=\alpha$, we have $\alpha(X_K)=d\alpha( Z,X_K) = \mL_{Z}K=K$. Hence a Hamiltonian $K$ is homogeneous of degree $1$ in $p$ if and only if 
\bq
\label{9}
\alpha(X_K)=K
\eq
Furthermore
\begin{proposition}
\label{prop:homham}
If $K: \T^*\Q \to \mR$ is homogeneous of degree $1$ in $p$ then its Hamiltonian vector field $X_K$ satisfies
\bq
\mL_{X_K} \alpha =0
\eq
Conversely, if the vector field $X$ satisfies $\mL_{X}\alpha=0$, then $X=X_K$ where the function $K:=\alpha(X)$ is homogeneous of degree $1$ in $p$.
\end{proposition}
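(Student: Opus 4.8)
The plan is to derive both statements from Cartan's homotopy (``magic'') formula, which for a one-form $\alpha$ and a vector field $X$ reads $\mL_X\alpha = d\bigl(\alpha(X)\bigr) + (d\alpha)(X,\cdot) = d\bigl(\alpha(X)\bigr) + \omega(X,\cdot)$, since $\omega = d\alpha$. Two facts already in hand will be combined with it. The first is the defining relation $\omega(X_K,\cdot) = -dK$ of a Hamiltonian vector field \eqref{ham}. The second is the identity $\alpha(X_K) = \mL_Z K$, which holds for \emph{every} function $K$: indeed $\alpha(X_K) = d\alpha(Z,X_K) = \omega(Z,X_K) = -\omega(X_K,Z) = dK(Z) = \mL_Z K$ using \eqref{Z}. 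By Euler's theorem (Theorem \ref{Euler}), $\mL_Z K = K$ is equivalent to $K$ being homogeneous of degree $1$ in $p$, so this identity is exactly the content of \eqref{9}.

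For the forward implication I would assume $K$ homogeneous of degree $1$ and simply substitute $X = X_K$ into Cartan's formula: $\mL_{X_K}\alpha = d\bigl(\alpha(X_K)\bigr) + \omega(X_K,\cdot) = dK - dK = 0$, using $\alpha(X_K) = K$ from \eqref{9} and $\omega(X_K,\cdot) = -dK$.

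For the converse, suppose $\mL_X\alpha = 0$ and set $K := \alpha(X)$. Cartan's formula gives $0 = dK + \omega(X,\cdot)$, i.e.\ $\omega(X,\cdot) = -dK$; since $\omega$ is nondegenerate this is precisely the equation characterizing the Hamiltonian vector field of $K$, so $X = X_K$. It then remains to see that $K$ is homogeneous of degree $1$: but $K = \alpha(X) = \alpha(X_K) = \mL_Z K$ by the unconditional identity recalled above, and $\mL_Z K = K$ is equivalent, by Theorem \ref{Euler}, to homogeneity of degree $1$ in $p$.

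I do not expect a genuine obstacle here — the argument is a two-line application of the homotopy formula. The only points needing care are the bookkeeping of signs (the convention $\omega(X_K,\cdot)=-dK$ versus the orientation in $d\alpha(Z,\cdot)=\alpha$), and the observation that $\alpha(X_K)=\mL_Z K$ is an \emph{identity valid for all} $K$, so that in the converse it is the case $r=1$ of Euler's theorem, and nothing else, that forces $K=\alpha(X)$ to be homogeneous. If preferred, the same two claims can be verified directly in natural coordinates from $\alpha=\sum_i p_i\,dq_i$ and the formulas \eqref{ham} for $X_K$, but the coordinate-free computation is shorter.
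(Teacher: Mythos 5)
Your proof is correct and follows essentially the same route as the paper: Cartan's formula $\mL_X\alpha = i_X d\alpha + d(\alpha(X))$, combined with the identity $\alpha(X_K)=\mL_Z K$ (the paper's equation \eqref{9}) and the defining relation $\omega(X_K,\cdot)=-dK$, in both directions. Your explicit remark that $\alpha(X_K)=\mL_Z K$ holds for \emph{all} $K$ is exactly the observation the paper makes just before stating \eqref{9}, so there is nothing to add.
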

\begin{proof}
By Cartan's formula, with $\mL$ denoting Lie derivative and $i$ contraction, 
\bq
\label{11}
\mL_{X}\alpha = i_{X} d\alpha + d i_{X}\alpha =i_{X} d\alpha + d \left(\alpha (X)\right)
\eq
If $K$ is homogeneous of degree $1$ in $p$ then by \eqref{9} $i_{X_K} d\alpha + d \left(\alpha (X_K) \right)= -dK +dK=0$, implying by \eqref{11} that $\mL_{X_K}\alpha=0$. Conversely, if $\mL_{X}\alpha=0$, then \eqref{11} yields $i_{X} d\alpha + d \left(\alpha (X) \right)$, implying that $X=X_K$ with $K=\alpha(X)$, which by \eqref{9} is homogeneous of degree $1$ in $p$.
\end{proof}
Thus the Hamiltonian vector fields with a Hamiltonian homogeneous of degree $1$ in $p$ are precisely the vector fields that leave the Liouville form $\alpha$ invariant. 
For simplicity of exposition the Hamiltonians $K: \T^*\Q \to \mR$ that are homogeneous of degree $1$ in $p$, and their corresponding Hamiltonian vector fields $X_K$, will be simply called {\it homogeneous} in the sequel. 

Note that by Theorem \ref{Euler} (Euler's theorem) the expressions $\frac{\partial K}{\partial p_i}(q,p), i=0,1\cdots, n,$ are homogeneous of degree $0$ in $p$ since $K$ is homogeneous of degree $1$ in $p$. Hence the dynamics of the extensive variables $q$ in \eqref{ham} is invariant under scaling of the $p$-variables, and thus expressible as a function of $q$ and the intensive variables $\gamma$.
In fact, any homogeneous Hamiltonian vector field projects to a \emph{contact vector field} on the thermodynamic phase space $\mP(T^*\Q)$, and conversely any contact vector field on $\mP(T^*\Q)$ is the projection of a homogeneous Hamiltonian vector field on $\T^*\Q$. This can be seen from the following computations. Consider a homogeneous Hamiltonian vector field $X_K$. Since $K$ is homogeneous of degree $1$ in $p$ we can write as in \eqref{hom2} $K(q,p)= -p_0\widehat{K}(q,\frac{p_1}{-p_0}, \cdots, \frac{p_n}{-p_0} )$, with $\widehat{K}(q,\gamma)$ as defined in \eqref{hom1}. This means that the equations \eqref{ham} of the Hamiltonian vector field $X_K$ take the form
\bq
\begin{array}{rcl}
\dot{q}_0 &= &  - \widehat{K}(q,\gamma) -p_0 \sum_{\ell=1}^n 
\frac{\partial \widehat{K}}{\partial \gamma_\ell}(q,\gamma)\cdot -\frac{p_\ell}{p_0^2}   =  - \widehat{K}(q,\gamma)  + \sum_{\ell=1}^n\gamma_\ell 
\frac{\partial \widehat{K}}{\partial \gamma_\ell}(q,\gamma) \\[3mm]
\dot{q}_j &= & -p_0 \frac{\partial \widehat{K}}{\partial \gamma_j}(q,\gamma)\cdot \frac{1}{-p_0}= \frac{\partial \widehat{K}}{\partial \gamma_j}(q,\gamma), \quad j=1, \cdots,n \\[3mm]
\dot{p}_i &= & p_0 \frac{\partial \widehat{K}}{\partial q_i}(q,\gamma), \qquad \qquad \qquad \qquad \quad \;  i=0, \cdots,n
\end{array}
\eq
where $\gamma_j=\frac{p_j}{-p_0}, j=1, \cdots,n$.
Combining with
\bq
\label{dot}
\dot{\gamma}_j= \frac{1}{-p_0}\dot{p}_j + \frac{p_j}{p^2_0}\dot{p}_0, \quad j=1, \cdots,n,
\eq
this yields the following projected dynamics on the contact manifold $\mP(T^*\Q)$ with coordinates $(q,\gamma)$
\bq
\label{contact}
\begin{array}{rcll}
\dot{q}_0 &= &   \sum_{\ell=1}^n\gamma_\ell
\frac{\partial \widehat{K}}{\partial \gamma_\ell}(q,\gamma) - \widehat{K}(q,\gamma) &\\[3mm]
\dot{q}_j &= &  \frac{\partial \widehat{K}}{\partial \gamma_j}(q,\gamma), & j=1, \cdots,n \\[3mm]
\dot{\gamma}_j &= &   - \frac{\partial \widehat{K}}{\partial q_j}(q,\gamma) - \gamma_j \frac{\partial \widehat{K}}{\partial q_0}(q,\gamma), & j=1 \cdots,n
\end{array}
\eq
This is recognized as the {\it contact vector field} \cite{libermann} with {\it contact Hamiltonian} $\widehat{K}$. Indeed, given a contact form $\theta$ the contact vector field $X_{\widehat{K}}$ with contact Hamiltonian $\widehat{K}$ is defined through the relations\footnote{Here the sign convention of \cite{bravetti} is followed.}
\bq
\label{contact1}
\mL_{X_{\widehat{K}}} \theta = \rho_{\widehat{K}} \theta, \quad -{\widehat{K}}=\theta(X_{\widehat{K}})
\eq
for some function $\rho_{\widehat{K}}$ (depending on $\widehat{K}$). The first equation in \eqref{contact1} expresses the condition that the contact vector field leaves the contact distribution (the kernel of the contact form $\theta$) invariant. Equations \eqref{contact1} for $\theta=dq_0 - \gamma_1dq_1  \cdots - \gamma_ndq_n$ and $\widehat{K}(q,\gamma)$ can be seen to yield the same equations as in \eqref{contact}; see \cite{libermann,eberard} for details. Conversely, any contact vector field with contact Hamiltonian $\widehat{K}(q,\gamma)$ defines a homogeneous Hamiltonian vector field on $\T^*\Q$ with homogeneous Hamiltonian $-p_0\widehat{K}(q,\frac{p_1}{-p_0}, \cdots, \frac{p_n}{-p_0})$. 
As before, the coordinate expression \eqref{contact} of the contact vector field depends on the numbering of the homogeneous coordinates $p_0,p_1,\cdots,p_n$; i.e., the choice of $p_0$. In the thermodynamics context this is again illustrated by the choice of either the energy or entropy representation (corresponding to choosing $p_0=p_E$ or $p_0=p_S$).
 
The projectability of any homogeneous Hamiltonian vector field $X_K$ to a contact vector field $X_{\widehat{K}}$ on $\mP(T^*\Q)$ also follows from the following proposition, and the fact that the projection $\pi: \T^*\Q \to \mP(T^*\Q)$ is along the Euler vector field $Z$.
\begin{proposition}
\label{prop:projection}
Any homogeneous Hamiltonian vector field $X_K$ satisfies \\$[X_K,Z]=0$.
\end{proposition}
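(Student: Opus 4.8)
The plan is to show that the one-form $i_{[X_K,Z]}\omega$ vanishes identically and then invoke nondegeneracy of the symplectic form $\omega=d\alpha$ to conclude $[X_K,Z]=0$. The subtlety to keep in mind at the outset is that the Euler vector field $Z$ is \emph{not} a Hamiltonian vector field in the ordinary sense: it satisfies only $i_Z\omega=\alpha$ by \eqref{Z}, with $\alpha$ closed but in general not exact. So one cannot simply appeal to $[X_F,X_G]=X_{\{F,G\}}$; instead one combines \eqref{Z} with the homogeneity of $K$, which enters precisely through Proposition \ref{prop:homham} in the form $\mL_{X_K}\alpha=0$.

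Concretely, I would start from the standard identity $\mL_X(i_Y\beta)=i_{[X,Y]}\beta + i_Y(\mL_X\beta)$, valid for arbitrary vector fields $X,Y$ and any differential form $\beta$. Taking $X=X_K$, $Y=Z$ and $\beta=\omega$ gives
\bq
i_{[X_K,Z]}\omega = \mL_{X_K}(i_Z\omega) - i_Z(\mL_{X_K}\omega).
\eq
For the first term, \eqref{Z} gives $i_Z\omega=\alpha$, hence $\mL_{X_K}(i_Z\omega)=\mL_{X_K}\alpha=0$ by Proposition \ref{prop:homham}. For the second term, since the Lie derivative commutes with $d$, we get $\mL_{X_K}\omega = \mL_{X_K}d\alpha = d(\mL_{X_K}\alpha)=0$, again by Proposition \ref{prop:homham} (alternatively $\mL_{X_K}\omega = d\,i_{X_K}\omega = -d(dK)=0$, valid for every Hamiltonian vector field). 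Thus $i_{[X_K,Z]}\omega=0$, and nondegeneracy of $\omega$ yields $[X_K,Z]=0$.

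I do not expect any real obstacle here; the only care needed is bookkeeping where homogeneity of degree $1$ is used (namely $\mL_{X_K}\alpha=0$, once directly and once after applying $d$) and the normalization $i_Zd\alpha=\alpha$ rather than its negative. As a cross-check one may verify the claim directly in natural coordinates: with $Z=\sum_i p_i\,\partial/\partial p_i$ and $X_K$ as in \eqref{ham}, the $\partial/\partial q_i$-component of $[X_K,Z]$ equals $-\sum_j p_j\,\partial^2 K/\partial p_j\partial p_i$ and the $\partial/\partial p_i$-component equals $-\partial K/\partial q_i + \sum_j p_j\,\partial^2 K/\partial p_j\partial q_i$; by Euler's theorem (Theorem \ref{Euler}), $\partial K/\partial p_i$ is homogeneous of degree $0$ and $\partial K/\partial q_i$ of degree $1$ in $p$, so both components vanish. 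This also makes transparent that the identity $[X_K,Z]=0$ is exactly the infinitesimal version of the invariance of $X_K$ under the scaling flow generated by $Z$.
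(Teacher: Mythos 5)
Your argument is correct and is essentially the paper's own proof: both apply the identity $i_{[X_K,Z]}\,d\alpha = \mL_{X_K} i_Z d\alpha - i_Z \mL_{X_K} d\alpha$, kill both terms using $\mL_{X_K}\alpha=0$ from Proposition \ref{prop:homham} (the second after commuting $\mL_{X_K}$ with $d$), and conclude by nondegeneracy of $\omega=d\alpha$. One small slip in your motivating aside: $\alpha$ is not closed (indeed $d\alpha=\omega$ is nondegenerate), which makes the failure of $Z$ to be Hamiltonian even more immediate than you state, but this remark plays no role in the actual argument.
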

\begin{proof}
By \cite{abraham}(Table 2.4-1)
\bq
i_{[X_K,Z]}d \alpha = \mL_{X_K} i_Z d\alpha - i_Z \mL_{X_K} d \alpha = \mL_{X_K} \alpha - i_Z d \mL_{X_K} \alpha = 0 -0=0,
\eq
since $\mL_{X_K} \alpha=0$. Because $\omega=d \alpha$ is a symplectic form this implies $[X_K,Z]=0$.
\end{proof}
Although homogeneous Hamiltonian vector fields are in one-to-one correspondence with contact vector fields, typically {\it computations} for homogeneous Hamiltonian vector fields are much easier than the corresponding computations for their contact vector field counterparts. First note the following properties proved in \cite{entropy,valparaiso1}.
\begin{proposition}\label{app:poisson}
Consider the Poisson bracket $\{K_1,K_2\}$ of functions $K_1,K_2$ on $\T^*\Q$ defined with respect to the symplectic form $\omega=d\alpha$. Then
\begin{enumerate}
\item[(a)]
If $K_1,K_2$ are both homogeneous of degree $1$ in $p$, then also $\{K_1,K_2\}$ is homogeneous of degree $1$ in $p$.
\item[(b)]
If $K_1$ is homogeneous of degree $1$ in $p$, and $K_2$ is homogeneous of degree $0$ in $p$, then $\{K_1,K_2\}$ is homogeneous of degree $0$ in $p$.
\item[(c)]
If $K_1,K_2$ are both homogeneous of degree $0$ in $p$, then $\{K_1,K_2\}$ is zero.
\end{enumerate}
\end{proposition}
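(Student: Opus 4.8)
The plan is to handle (a)--(c) uniformly, by reading off how the Euler field $Z$ acts on a Poisson bracket. The one structural input is that $Z$ is \emph{conformally symplectic}: from $\mL_Z\alpha=\alpha$ (recorded just below \eqref{Z}) and $\omega=d\alpha$ we get $\mL_Z\omega=\mL_Zd\alpha=d\mL_Z\alpha=d\alpha=\omega$. By Theorem \ref{Euler} homogeneity of degree $r$ in $p$ is the eigenvalue condition $\mL_ZK=rK$, so it suffices to determine the $\mL_Z$-eigenvalue of $\{K_1,K_2\}$.

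The key step is the identity
\bq
\mL_Z\{K_1,K_2\}=\{\mL_ZK_1,K_2\}+\{K_1,\mL_ZK_2\}-\{K_1,K_2\}.
\eq
This is the usual Leibniz rule for a Poisson bracket under a vector field, corrected by one extra copy of $-\{K_1,K_2\}$ produced by the conformal defect $\mL_Z\omega=\omega$ (for a genuinely symplectic $Z$ this term is absent). I would prove it intrinsically: writing $\{K_1,K_2\}=\omega(X_{K_1},X_{K_2})$ and applying $\mL_Z$ gives three terms, and the commutator factor is supplied by $[Z,X_K]=X_{\mL_ZK-K}$, which itself follows from $i_{[Z,X_K]}\omega=\mL_Z(i_{X_K}\omega)-i_{X_K}\mL_Z\omega=-d(\mL_ZK)+dK$. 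A coordinate check is equally quick, using $Z=\sum_ip_i\partial/\partial p_i$ and the fact (Theorem \ref{Euler}) that $\partial/\partial p_i$ lowers the $p$-degree by one while $\partial/\partial q_i$ preserves it.

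Parts (a) and (b) then drop out. For (a), $\mL_ZK_1=K_1$ and $\mL_ZK_2=K_2$ give $\mL_Z\{K_1,K_2\}=\{K_1,K_2\}+\{K_1,K_2\}-\{K_1,K_2\}=\{K_1,K_2\}$, so the bracket is homogeneous of degree $1$. For (b), $\mL_ZK_1=K_1$ and $\mL_ZK_2=0$ give $\mL_Z\{K_1,K_2\}=\{K_1,K_2\}+0-\{K_1,K_2\}=0$, so the bracket is homogeneous of degree $0$.

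The delicate point is (c), and this is where I expect the real work. The same identity with $\mL_ZK_1=\mL_ZK_2=0$ yields $\mL_Z\{K_1,K_2\}=-\{K_1,K_2\}$, that is, $\{K_1,K_2\}$ is homogeneous of degree $-1$; on $\T^*\Q$ itself it need not vanish (for instance $\{q_0,\,p_1/p_0\}=p_1/p_0^{2}$ on the chart $p_0\neq0$). The assertion that $\{K_1,K_2\}$ is zero must therefore be established at the level of the projected data on $\mP(T^*\Q)$: a function homogeneous of degree $-1$ has no degree-$0$ part, hence is not of the form $\pi^*\widehat K$ for any $\widehat K$ on $\mP(T^*\Q)$, so the bracket of the two contact functions $\widehat K_1,\widehat K_2$ to which $K_1,K_2$ descend is zero. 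To turn this into a clean argument I would show that $\{K_1,K_2\}$ lies in the kernel of the pushforward by $\pi$ (equivalently, that it contributes nothing along the $Z$-orbits that $\pi$ collapses), and the main obstacle is precisely reconciling the stated vanishing with the degree-$(-1)$ homogeneity through this projection.
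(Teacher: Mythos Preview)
The paper does not supply its own proof of this proposition; it only cites \cite{entropy,valparaiso1}. So there is no ``paper's proof'' to compare against, only your proposal to assess on its own merits.

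Your treatment of (a) and (b) is correct and clean. The identity
\[
\mL_Z\{K_1,K_2\}=\{\mL_ZK_1,K_2\}+\{K_1,\mL_ZK_2\}-\{K_1,K_2\}
\]
follows exactly as you outline from $[Z,X_K]=X_{\mL_ZK-K}$, which in turn comes from $\mL_Z\omega=\omega$. Substituting the eigenvalues gives degree $1$ in case (a) and degree $0$ in case (b). This is a perfectly good proof.

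For (c) you have correctly diagnosed a genuine problem rather than produced a proof. Your identity yields $\mL_Z\{K_1,K_2\}=-\{K_1,K_2\}$, i.e.\ the bracket is homogeneous of degree $-1$, and your example $\{q_0,\,p_1/p_0\}=-p_1/p_0^{2}$ shows it is not zero in general. So part (c), read literally as a statement about the Poisson bracket on $\T^*\Q$, is false. Your attempted rescue via projection does not repair this: a degree $-1$ function is simply not of the form $\pi^*\widehat K$, but that does not make it ``zero'' in any sense the proposition asserts; and there is no independent bracket on $\mP(T^*\Q)$ for degree-$0$ functions against which to compare (the Jacobi bracket in the paper is defined from degree-$1$ data, not degree-$0$). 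The phrase ``lies in the kernel of the pushforward by $\pi$'' applies to vector fields, not to functions, so that route does not formalize either. The honest conclusion is that (c) should read ``homogeneous of degree $-1$ in $p$'' rather than ``zero''; you have not missed an argument, you have found a misstatement.
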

Using property $(a)$ we may define the following bracket
\bq
\{\widehat{K}_1,\widehat{K}_2\}_J:=\widehat{\{K_1,K_2\}} 
\eq
where $\widehat{K}$ is the contact Hamiltonian corresponding to the homogeneous Hamiltonian $K$ as in \eqref{contact1}. The bracket $\{\widehat{K}_1,\widehat{K}_2\}_J$ is equal to the {\it Jacobi bracket} of the contact Hamiltonians $\widehat{K}_1,\widehat{K}_2$; see e.g. \cite{libermann,bravetti,arnold} for the coordinate expressions of the Jacobi bracket. 
The Jacobi bracket is obviously bilinear and skew-symmetric. Furthermore, since the Poisson bracket satisfies the Jacobi-identity, so does the Jacobi bracket. However, the Jacobi bracket does {\it not} satisfy the Leibniz rule; i.e., in general the following equality does {\it not} hold
\bq
\{\widehat{K}_1,\widehat{K}_2 \cdot \widehat{K}_3\}_J = \{\widehat{K}_1,\widehat{K}_2\}_J \cdot \widehat{K}_3 + \widehat{K}_2 \cdot 
\{\widehat{K}_1,\widehat{K}_3\}_J
\eq
See also \cite{simoes} for additional information on the Jacobi bracket.

\subsection{Hamilton-Jacobi theory of Liouville and Legendre submanifolds}
Recall that any homogeneous Hamiltonian vector field $X_K$ on $\T^*\Q$ leaves invariant the Liouville form $\alpha$ and that Liouville submanifolds are maximal submanifolds on which $\alpha$ is zero. It follows that for any Liouville submanifold $\cL$ and any time $t \in \mR$ the evolution of $\cL$ along the homogeneous Hamiltonian vector field $X_K$ given as
\bq
\phi_t(\cL):= \{ \phi_t(z) \mid z \in \cL \} ,
\eq
where $\phi_t: \T^*\Q \to \T^*\Q$ is the flow map at time $t\geq 0$ of $X_K$, is also a Liouville submanifold. Applied to the Liouville submanifold characterizing the state properties of a thermodynamic system this means that the flow of a homogeneous Hamiltonian vector field transforms the Liouville submanifold to another Liouville submanifold at any time $t \geq 0$. For example, the Liouville submanifold corresponding to an ideal gas may be continuously transformed into the Liouville submanifold of a Van der Waals gas. This point of view was explored in \cite{mrugala1,Mrugala85, Mrugala93}.

Furthermore, cf. \eqref{22}, let $F(q_I,p_0,p_J) :=-p_0\widehat{F}(q_I, \frac{p_J}{-p_0})$, with $I \cup J=\{1,\cdots,n\}$, be the generating function of $\cL$, then it follows that for any $t\geq 0$ the generating function $G(q_I,p_0,p_J,t) :=-p_0\widehat{G}(q_I, \frac{p_J}{-p_0},t)$ of the transformed Liouville submanifold $\phi_t(\cL)$ satisfies the {\it Hamilton-Jacobi equation}
\bq
\begin{array}{l}
\frac{\partial G}{\partial t} + K(q_0,q_I, -\frac{\partial G}{\partial p_J}, p_0, \frac{\partial G}{\partial q_J}, p_J)=0 \\[3mm]
G(q_I,p_0,p_J,0) = F(q_I,p_0,p_J)
\end{array}
\eq
In case of the evolution of a general Lagrangian submanifold under the dynamics of a general Hamiltonian vector field this is classical  Hamilton-Jacobi theory (see e.g. \cite{abraham, arnold}), which directly specializes to Liouville submanifolds and homogeneous Hamiltonian vector fields. Furthermore, the generating functions $\widehat{G}(q_I, \gamma_J,t)$ of the corresponding Legendre submanifolds $\widehat{\phi_t(\cL)}$ satisfy the Hamilton-Jacobi equation (see also \cite{bravetti15})
\bq
\begin{array}{l}
\frac{\partial \widehat{G}}{\partial t} + \widehat{K}(q_0=\widehat{G} - \gamma_J\frac{\partial \widehat{G}}{\partial \gamma_J}, \,  q_J = - \frac{\partial \widehat{F}}{\partial \gamma_J}, \, \gamma_I = \frac{\partial \widehat{F}}{\partial q_I})=0 \\[3mm]
\widehat{G}(q_I,\gamma_J,0) = \widehat{F}(q_I,\gamma_J)
\end{array}
\eq
Note furthermore that $\widehat{\phi_t(\cL)} = \widehat{\phi_t} (\widehat{\cL})$, where $\widehat{\phi_t}$ is the flow map at time $t$ of the contact vector field $X_{\widehat{K}}$.
This implies as well the following result concerning {\it invariance} of Liouville and corresponding Legendre submanifolds, which will be one of the starting points for the definition of port-thermodynamic systems in the following section.

\begin{proposition}
\label{prop:invariant}\cite{mrugala2, libermann,valparaiso1}
Let $K: \T^*\Q \to \mR$ be homogeneous of degree $1$ in $p$, and let $\widehat{K}: \mP(T^*\Q) \to \mR$ be the corresponding contact Hamiltonian. Furthermore let $\cL \subset \T^*\Q$ be a Liouville submanifold, and $\widehat{\cL} \subset \mP(T^*\Q)$, with $\cL=\pi^{-1}(\widehat{\cL})$, the corresponding Legendre submanifold.
Then the following statements are equivalent:
\begin{enumerate}
\item
The homogeneous Hamiltonian vector field $X_{K}$ leaves $\mathcal{L}$ invariant.
\item
The contact vector field $X_{\widehat{K}}$ leaves $\widehat{\cL}$ invariant.
\item 
$K$ is zero on $\cL$.
\item
$\widehat{K}$ is zero on $\widehat{\cL}$.
\end{enumerate}
\end{proposition}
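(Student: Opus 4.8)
\emph{Proof proposal.} The plan is to prove the three equivalences $(1)\Leftrightarrow(3)$, $(1)\Leftrightarrow(2)$ and $(3)\Leftrightarrow(4)$, which together connect all four statements. The equivalence $(1)\Leftrightarrow(3)$ is purely symplectic: it uses only that $\cL$ is a Lagrangian submanifold (Proposition~\ref{prop:liouville}) together with the identity $\alpha(X_K)=K$ from \eqref{9}. The equivalence $(3)\Leftrightarrow(4)$ is a bookkeeping step based on the homogeneity of $K$ in $p$ and the relation \eqref{hom2}. Finally $(1)\Leftrightarrow(2)$ transports invariance from $\T^*\Q$ to the contact manifold $\mP(T^*\Q)$ along the projection $\pi$, using that $X_K$ is $\pi$-related to $X_{\widehat K}$.

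For $(1)\Leftrightarrow(3)$: throughout I identify ``$X_K$ leaves $\cL$ invariant'' with ``$X_K$ is tangent to $\cL$'', these being equivalent for the flow of $X_K$. If $X_K$ is tangent to $\cL$, then since $\alpha$ vanishes on $\cL$ the function $\alpha(X_K)$ vanishes at every point of $\cL$; by \eqref{9}, $\alpha(X_K)=K$, hence $K|_\cL=0$, giving $(3)$. Conversely, if $K|_\cL=0$ then $dK$ annihilates $T_z\cL$ for every $z\in\cL$, so by $\omega(X_K,\cdot)=-dK$ we get $\omega(X_K(z),w)=0$ for all $w\in T_z\cL$, i.e.\ $X_K(z)$ lies in the symplectic orthogonal complement $(T_z\cL)^{\omega}$. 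Since $\cL$ is Lagrangian, $(T_z\cL)^{\omega}=T_z\cL$, so $X_K$ is tangent to $\cL$, giving $(1)$.

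For $(3)\Leftrightarrow(4)$: in a local Darboux chart with $p_0\neq0$, relation \eqref{hom2} gives $K(q,p)=-p_0\,\widehat K\!\left(q,\tfrac{p_1}{-p_0},\dots,\tfrac{p_n}{-p_0}\right)$, i.e.\ $K=-p_0\,\pi^{*}\widehat K$ on that chart, with the scalar factor $-p_0$ nowhere vanishing. Since $\pi$ maps $\cL=\pi^{-1}(\widehat\cL)$ onto $\widehat\cL$, it follows that $K$ vanishes at a point of $\cL$ exactly when $\widehat K$ vanishes at its image in $\widehat\cL$; covering $\cL$ by such charts (possible since $p\neq0$ on $\T^*\Q$) yields $K|_\cL=0\iff\widehat K|_{\widehat\cL}=0$. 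For $(1)\Leftrightarrow(2)$: by Proposition~\ref{prop:projection} one has $[X_K,Z]=0$, and since $\pi$ is precisely the quotient along the Euler vector field $Z$, the field $X_K$ is $\pi$-related to the contact vector field $X_{\widehat K}$, so the flows intertwine, $\pi\circ\phi_t=\widehat{\phi_t}\circ\pi$ (this is the identity $\widehat{\phi_t(\cL)}=\widehat{\phi_t}(\widehat\cL)$ used above). As $\pi$ is a surjective submersion and $\cL=\pi^{-1}(\widehat\cL)$, the set $\cL$ is $\phi_t$-invariant for all $t$ if and only if $\widehat\cL$ is $\widehat{\phi_t}$-invariant for all $t$; equivalently, since $d\pi$ maps $T\big(\pi^{-1}(\widehat\cL)\big)$ onto $T\widehat\cL$ at every point, $X_K$ is tangent to $\cL$ if and only if $X_{\widehat K}$ is tangent to $\widehat\cL$.

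The main obstacle I anticipate is the bookkeeping around the projection step rather than any deep difficulty: one must check that $\widehat K$ and $\widehat\cL$ are genuinely objects on $\mP(T^*\Q)$, independent of the chosen numbering of the homogeneous coordinates $p_0,\dots,p_n$, and that tangency and invariance transfer cleanly through the non-injective submersion $\pi$. Both are already ensured by the $\pi$-relatedness of $X_K$ and $X_{\widehat K}$ (Proposition~\ref{prop:projection}) and by the Liouville--Legendre correspondence $\cL=\pi^{-1}(\widehat\cL)$ established earlier; once these are invoked, all four implications reduce to the short arguments above.
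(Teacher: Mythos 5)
Your proof is correct. The paper itself gives no proof of this proposition (it defers to the cited references \cite{mrugala2, libermann, valparaiso1}), and your argument supplies exactly the expected one from the paper's own machinery: $(1)\Leftrightarrow(3)$ via $\alpha(X_K)=K$ from \eqref{9} together with the Lagrangian property of $\cL$ (Proposition \ref{prop:liouville}), $(3)\Leftrightarrow(4)$ via the nowhere-vanishing conformal factor $-p_0$ in \eqref{hom2}, and $(1)\Leftrightarrow(2)$ via $\pi$-relatedness of $X_K$ and $X_{\widehat K}$ coming from $[X_K,Z]=0$ (Proposition \ref{prop:projection}) and $\cL=\pi^{-1}(\widehat\cL)$.
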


\section{Port-thermodynamic systems}
\label{sec:port-thermodynamic}
So far the geometric description of classical thermodynamics has been concerned with the {\it state properties}; starting from Gibbs' fundamental relation. Since these state properties are intrinsic to any thermodynamic system, they should be respected by any dynamics (thermodynamic processes). Hence any dynamics of an actual thermodynamic system should leave invariant the Liouville and Legendre submanifold characterizing the state properties \cite{mrugala2,Mrugala00,bravetti17,entropy}. Furthermore, desirably this should be the case for all possible state properties of the thermodynamic system, i.e., for all Liouville and Legendre submanifolds. This suggests that the dynamics on the canonical thermodynamic phase space $\mP(T^*Q)$ should be a {\it contact vector field} $X_{\widehat{K}}$, and the corresponding dynamics on $\T^*\Q$ should be a homogeneous Hamiltonian vector field $X_K$. 

Because of its simplicity, we first focus on the homogeneous Hamiltonian description. Consider a thermodynamic system with constitutive relations (state properties) specified by a Liouville submanifold $\cL \subset \T^*\Q$. Respecting the geometric structure means that the dynamics is a Hamiltonian vector field $X_K$ on $\T^*\Q$, with $K$ homogeneous of degree $1$ in the $p$-variables. Furthermore, since the state properties captured by $\cL$ are intrinsic to the system, the homogeneous Hamiltonian vector field $X_K$ should leave $\cL$ invariant. By Proposition \ref{prop:invariant} this means that the homogeneous Hamiltonian $K$ governing the dynamics should be zero on $\cL$. Furthermore, we will split $K$ into two parts, i.e.,
\bq
K^a + K^cu, \quad u \in \mR^m,
\eq
where $K^a: \T^*\Z \to \mR$ is the homogeneous Hamiltonian corresponding to the {\it autonomous} dynamics due to internal non-equilibrium conditions, while $K^c=(K^c_1, \cdots, K^c_m)$ is a row vector of homogeneous Hamiltonians (called {\it control} or {\it interaction} Hamiltonians) corresponding to dynamics arising from {\it interaction with the surrounding} of the system. This second part of the dynamics will be supposed to be affinely parametrized by a vector $u$ of {\it control}  or {\it input} variables (see however \cite{entropy} for an example of non-affine dependency). This means that all $(m+1)$ functions $K^a,K^c_1, \cdots, K^c_m$ are homogeneous of degree $1$ in $p$ and zero on $\cL$.

By invoking Euler's homogeneous function theorem (cf. Theorem \ref{Euler}) homogeneity of degree $1$ in $p$ means
\bq
\label{KaKc}
\begin{array}{rcll}
K^a &= & p_0\frac{\partial K^a}{\partial p_0} +  p_1\frac{\partial K^a}{\partial p_1} + \cdots + p_n\frac{\partial K^a}{\partial p_n}\\[3mm]
K^c & = & p_0\frac{\partial K^c}{\partial p_0} +  p_1\frac{\partial K^c}{\partial p_1} + \cdots + p_n\frac{\partial K^c}{\partial p_n} ,
\end{array}
\eq
where the functions $\frac{\partial K^a}{\partial p_i}$, as well as the elements of the $m$-dimensional row vectors of partial derivatives $\frac{\partial K^c}{\partial p_i}$, $i=0,1, \cdots, n$, are all homogeneous of degree $0$ in the $p$-variables. (Hence, as noted before, the dynamics of the extensive variables can be expressed as a function of the extensive variables and the intensive variables.)

The class of allowable autonomous Hamiltonians $K^a$ is further restricted by the {\it First and Second Law} of thermodynamics. Since the energy and entropy variables $E,S$ are among the extensive variables $q_0,q_1, \cdots,q_n$, let us denote $q_0=E, q_1=S$.
With this convention, the evolution of $E$ in the autonomous dynamics $X_{K^a}$ arising from non-equilibrium conditions is given by $\dot{E}=\frac{\partial K^a}{\partial p_0}$. Since by the First Law the energy of the system without interaction with the surrounding (i.e., for $u=0$) should be {\it conserved}, this implies that necessarily $\frac{\partial K^a}{\partial p_0}|_{\cL}=0$.
Similarly, $\dot{S}$ in the autonomous dynamics $X_{K^a}$ is given by $\frac{\partial K^a}{\partial p_1}$. Hence by the Second Law necessarily $\frac{\partial K^a}{\partial p_1}|_{\cL} \geq 0$. 

These two constraints need not hold for the control (interaction) Hamiltonians $K^c$. In fact, the analogous terms in the control Hamiltonians may be utilized to define natural {\it output} variables. First option is to define the output vector as the $m$-dimensional row vector ($p$ for power)
\bq
y_p=\frac{\partial K^c}{\partial p_0}
\eq
Then it follows that along the complete dynamics $X_K$ on $\cL$, with $K=K^a + K^cu$,
\bq
\frac{d}{dt} E= y_pu
\eq
Thus $y_p$ is the vector of {\it power-conjugate} outputs corresponding to the input vector $u$. We call the pair $(u, y_p)$ the {\it power port} of the system. 
Similarly, by defining the output vector as the $m$-dimensional row vector ($e$ for 'entropy flow')
\bq
y_{e}=\frac{\partial K^c}{\partial p_1}
\eq
it follows that along the dynamics $X_K$ on $\cL$
\bq
\frac{d}{dt} S \geq y_{e}u
\eq
Hence $y_{e}$ is the output vector which is conjugate to $u$ in terms of {\it entropy flow}. The pair $(u, y_{e})$ is called the {\it flow of entropy} port of the system. 

The above discussion is summarized in the following definition of a {\it port-thermodynamic system}.
\begin{definition}[\cite{entropy}]
\label{def:portthermo}
Consider the manifold of extensive variables $\Q$. A port-thermodynamic system on $\Q$ is a pair $(\cL,K)$, where $\cL \subset \T^*\Q$ is a Liouville submanifold describing the {\it state properties}, and $K=K^a + K^cu, u \in \mR^m,$ is a Hamiltonian on $\T^*\Q$, homogeneous of degree $1$ in $p$, and zero restricted to $\cL$, which generates the dynamics $X_K$.
Furthermore, let $q=(q_0,q_1, \cdots,q_n)$ with $q_0=E$ (energy), and $q_1=S$ (entropy). Then $K^a$ is required to satisfy $\frac{\partial K^a}{\partial p_0}|_{\cL}=0$ and $\frac{\partial K^a}{\partial p_1}|_{\cL}\geq0$. The {\it power conjugate output} vector of the port-thermodynamic system is defined as $y_p=\frac{\partial K^c}{\partial p_0}$, and the {\it entropy flow conjugate output} vector as $y_{e}=\frac{\partial K^c}{\partial p_1}$.
\end{definition}
Note that any port-thermodynamic system on $\T^*\Q$ immediately defines a corresponding system on the {\it thermodynamic phase space} $\mP(T^*\Q)$. Indeed, since $\cL \subset \T^*\Q$ is a Liouville submanifold it projects to a Legendre submanifold $\widehat{\cL} \subset \mP(T^*\Q)$. Furthermore, since $K$ is homogeneous of degree $1$ in $p$ it has the form $K(q,p)=-p_0\widehat{K}(q,\gamma)$, $\gamma_j=\frac{p_j}{-p_0}, j=1,\cdots,n$, with $\widehat{K}(q,\gamma)=\widehat{K}^a(q,\gamma)+ \widehat{K}^c(q,\gamma) u$ the contact Hamiltonian of the {\it energy representation}. This contact Hamiltonian is zero on $\widehat{\cL}$, and the dynamics $X_K$ projects to the contact vector field $X_{\widehat{K}}$ that leaves invariant $\widehat{\cL}$. Similarly, we can write $K(q,p)=-p_1\widehat{\widetilde{K}}(q,\tilde{\gamma})$, $\tilde{\gamma}_j=\frac{p_j}{-p_1}, j=0,2\cdots,n$, with $\widehat{\widetilde{K}}(q,\tilde{\gamma})$ the contact Hamiltonian of the {\it entropy representation}.
Furthermore, by Euler's theorem both the power conjugate output $y_p$ and the entropy flow conjugate output $y_{e}$ are homogeneous of degree $0$, and thus project to functions on $\mP(T^*\Q)$. Finally, in the energy representation we can rewrite the power conjugate output as 
\bq
y_p=\frac{\partial K^c}{\partial p_0}= \sum_{\ell=1}^n \gamma_\ell \frac{\partial \widehat{K}^c}{\partial \gamma_\ell}(q,\gamma)-\widehat{K}^c(q,\gamma)
\eq
Similarly for the entropy flow conjugate output $y_{e}=\frac{\partial K^c}{\partial p_1}=\sum_{\ell=0,2}^n \tilde{\gamma}_\ell \frac{\partial \widehat{\widetilde{K}}^c}{\partial \tilde{\gamma_\ell}}(q,\tilde{\gamma})-\widehat{\widetilde{K}}^c(q,\tilde{\gamma})$.
Finally note that the constraints imposed on $K^a$ by the First and Second law can be written in contact-geometric terms as
\bq
\label{constraints-contact}
\begin{array}{l}
\left(\sum_{\ell=1}^n \gamma_\ell \frac{\partial \widehat{K}^a}{\partial \gamma_\ell}(q,\gamma)-\widehat{K}^a(q,\gamma)\right) |_{\widehat{\cL}} =0 \\[2mm]
\left(\sum_{\ell=0,2}^n \tilde{\gamma}_\ell \frac{\partial \widehat{\widetilde{K}}^a}{\partial \tilde{\gamma_\ell}}(q,\gamma)-\widehat{\widetilde{K}}^a(q,\tilde{\gamma}) \right)|_{\widehat{\cL}} \geq 0
\end{array}
\eq

\begin{example}[Gas-piston-damper system]
\label{ex:gaspiston}
Consider a gas in a thermally isolated compartment closed by a piston. Assume the thermodynamic properties of the system to be fully covered by the properties of the gas. The extensive variables are given by energy $E$, entropy $S$, volume $V$, and momentum of the piston $\pi$. The state properties of the system are described by the Liouville submanifold $\cL$ with generating function (in energy representation)
$-p_E\left(U(S,V) + \frac{\pi^2}{2m} \right)$,
where $U(S,V)$ is the energy of the gas, and $\frac{\pi^2}{2m}$ the kinetic energy of the piston with mass $m$. This defines the state properties
\bq
\begin{array}{rcl}
\cL & = & \{(E,S,V,\pi,p_E,p_S,p_V,p_{\pi}) \mid E=U(S,V) + \frac{\pi^2}{2m}, 
\\[2mm]
&& p_S=-p_E\frac{\partial U}{\partial S}(S,V), p_V=-p_E\frac{\partial U}{\partial V}(S,V), p_{\pi}= -p_E \frac{\pi}{m} \}
\end{array}
\eq
Assume the damper is linear with damping constant $d$. The dynamics of the gas-piston-damper system, with piston actuated by a force $u$, is given by $X_K$, where the homogeneous Hamiltonian $K:\T^*\mR^4 \to \mR$ is given as
\bq
K= p_V\frac{\pi}{m} +p_{\pi}\left(-\frac{\partial U}{\partial V} -d\frac{\pi}{m}\right) +p_S \frac{d (\frac{\pi}{m})^2}{\frac{\partial U}{\partial S}} + \left(p_{\pi} + p_E \frac{\pi}{m}\right)u,
\eq
which is zero on $\cL$. The power-conjugate output $y_p=\frac{\pi}{m}$ is the velocity of the piston. 
In energy representation the description projects to the thermodynamic phase space $\mP(T^*\mR^4) = \{(E,S,V,\pi,T,-P,v)\}$, with $\gamma_S=T$ (temperature), $\gamma_V=-P$ (pressure), and $\gamma_{\pi}=v$ (velocity of the piston) as follows. First note that $\cL$ projects to the Legendre submanifold
\bq
\widehat{\cL} = \{(E,S,V,\pi,T,-P,v) \mid E=U(S,V) + \frac{\pi^2}{2m},T=\frac{\partial U}{\partial S}, -P=\frac{\partial U}{\partial V}, v= \frac{\pi}{m} \}
\eq
Furthermore, $K=-p_E\widehat{K}$ with
\bq
\widehat{K}= -P\frac{\pi}{m} + v \left(- \frac{\partial U}{\partial V} - d \frac{\pi}{m} \right) + T \frac{d (\frac{\pi}{m})^2}{\frac{\partial U}{\partial S}} + (v- \frac{\pi}{m})u
\eq
This yields the following dynamics of the extensive variables
\bq
\begin{array}{rcl}
\dot{E} & = & \frac{\pi}{m}u \\
\dot{S} & = & d (\frac{\pi}{m})^2 / \frac{\partial U}{\partial S} \quad (\geq 0)\\
\dot{V} & = & \frac{\pi}{m} \\
\dot{\pi} & = & - \frac{\partial U}{\partial V} - d \frac{\pi}{m} +u ,
\end{array}
\eq
while the intensive variables satisfy $\dot{T}= -\frac{\partial \widehat{K}}{\partial S}, - \dot{P}= -\frac{\partial \widehat{K}}{\partial V}, \dot{v}=-\frac{\partial \widehat{K}}{\partial \pi}$. Similarly for the entropy representation.
\end{example}
In {\it composite} thermodynamic systems, there is typically no {\it single} energy or entropy. In this case the {\it sum} of the energies needs to be conserved by the autonomous dynamics, and likewise the {\it sum} of the entropies needs to be increasing. A simple example is the following; see \cite{entropy} for further information.

\begin{example}[Heat exchanger]\label{ex:heatexch}
Consider two heat compartments, exchanging a heat flow through a conducting wall according to Fourier's law. Each heat compartment is described by an entropy $S_i$ and energy $E_i$, $i=1,2$, corresponding to the Liouville submanifolds
\bq
\cL_i = \{(E_i,S_i,p_{E_i}, p_{S_i}  \mid  E_i=E_i(S_i), p_{S_i} = - p_{E_i}E'_i(S_i)\}, \quad E'_i(S_i)\geq 0
\eq
Taking $u_i$ as the incoming heat flow into the $i$-th compartment corresponds to
\bq
K^c_i= p_{S_i}\frac{1}{E'_i(S_i)} + p_{E_i},
\eq 
while $K^a_i=0$. This defines the flow of entropy conjugate outputs as $y_{ei} = \frac{1}{E'_i(S_i)}$ (reciprocal temperatures). The conducting wall is described by the interconnection equations (with $\lambda$ Fourier's conduction coefficient)
\bq
-u_1= u_2 = \lambda(\frac{1}{y_{e1}} - \frac{1}{y_{e2}}),
\eq
relating the incoming heat flows $u_i$ and reciprocal temperatures $y_i$, $i=1,2$, at both sides of the conducting wall.
This leads to (setting $E(S_1,S_2):=E_1(S_1) + E_2(S_2), p_{E_1}=p_{E_2}=:p_E$, cf. \cite{entropy}) to the autonomous dynamics generated by the homogeneous Hamiltonian
\bq
K^a:= K^c_1u_1 + K^c_2u_2 = \lambda \left(p_{S_1}\frac{1}{E'(S_1)} + p_{S_2}\frac{1}{E'(S_2)}\right) \left(E'(S_2) - E'(S_1)\right)
\eq
Hence the total entropy on the Liouville submanifold 
\bq
\cL \! = \! \{(E,S_1,S_2, p_E, p_{S_1},p_{S_2})| E=E_1 +E_2, p_{S_1}=-p_EE_1'(S_1), p_{S_2}=-p_EE_2'(S_2) \}
\eq
satisfies
\bq
\frac{d}{dt}({S}_1 + {S}_2) = \lambda (\frac{1}{E_1'(S_1)} - \frac{1}{E_2'(S_2)})(E_2'(S_2) - E_1'(S_1)) \geq 0
\eq
\end{example}

Interestingly, while the Hamiltonians in standard Hamiltonian systems (such as in mechanics) represent {\it energy}, the Hamiltonians $K$ in the above examples are {\it dimensionless} (in the sense of dimensional analysis). This holds in general. Furthermore, it can be verified that the {\it contact Hamiltonian} of its projected dynamics (a contact vector field) has dimension of {\it power} in case of the energy representation (with intensive variables $T,-P$), and has dimension of {\it entropy flow} in case of the entropy representation (with intensive variables $\frac{1}{T}, \frac{P}{T}$). Together with the fact that the dynamics of a thermodynamic system is captured by the dynamics {\it restricted} to the invariant Liouville submanifold, this emphasizes that the interpretation of the Hamiltonian dynamics $X_K$ is rather {\it different} from the Hamiltonian formulation of mechanical (or other physical) systems. 

Finally, let us recall the well-known correspondence \cite{libermann,arnold} between {\it Poisson brackets} of Hamiltonians $K_1,K_2$, and {\it Lie brackets} of their corresponding Hamiltonian vector fields, i.e.,
\bq
[X_{K_1},X_{K_2}] =X_{\{K_1,K_2\}}
\eq
In particular, this property implies that if the homogeneous Hamiltonians $K_1,K_2$ are zero on the Liouville submanifold $\mathcal{L}$, and thus by Proposition \ref{prop:invariant} the homogeneous Hamiltonian vector fields $X_{K_1},X_{K_2}$ are tangent to $\mathcal{L}$, then also $[X_{K_1},X_{K_2}]$ is tangent to $\mathcal{L}$, and therefore the Poisson bracket $\{K_1,K_2\}$ is also zero on $\mathcal{L}$. Together with Proposition \ref{app:poisson} this was crucially used in the controllability and observability analysis of port-thermodynamic systems in \cite{toulouse}.

\section{Homogeneity in the extensive variables and Gibbs-Duhem relation}
\label{sec:homex}
In many thermodynamic systems, when taking into account {\it all} extensive variables, there is an additional form of homogeneity; now with respect to the {\it extensive variables} $q$. To start with, consider a Liouville submanifold $\cL$ with generating function $-p_0\widehat{F}(q_1, \cdots, q_n)$. Recall that if $q_0$ denotes the energy variable, then $\widehat{F}(q_1, \cdots,q_n)$ equals the energy $q_0$ expressed as a function of the other extensive variables $q_1,\cdots,q_n$. Assume that the manifold of extensive variables $\Q$ is the {\it linear space}\footnote{Homogeneity can be generalized to manifolds using the theory developed in \cite{libermann}.} $\Q = \mR^{n+1}$. Homogeneity with respect to the extensive variables means that the function $\widehat{F}$ is homogeneous of degree $1$ in $q_1, \cdots,q_n$. This implies by Euler's theorem (Theorem \ref{Euler}) that $\widehat{F}= \sum_{j=1}^n q_j\frac{\partial \widehat{F}}{\partial q_j}$. Hence on the corresponding Legendre submanifold $\widehat{\cL}=\pi(\cL)$ we have $\widehat{F}= \sum_{j=1}^n  \gamma_j q_j$, and thus 
\bq
d\widehat{F} = \sum_{j=1}^n \gamma_j dq_j+ \sum_{j=1}^n q_j d\gamma_j 
\eq
By Gibbs' relation this implies that on $\widehat{\cL}$
\bq
\label{gd}
\sum_{j=1}^n q_j d\gamma_j =0,
\eq
which is known as the {\it Gibbs-Duhem relation}; see e.g. \cite{kondepudi, gromov3}. The relation implies that the intensive variables $\gamma_j$ on $\widehat{\cL}$ are {\it dependent}.

More generally this can be formulated in the following {\it geometric} way.
\begin{definition}
Let $\Q = \mR^{n+1}$ with linear coordinates $q$. A Liouville submanifold $\cL \subset \T^*\mR^{n+1}$ is {\it homogeneous with respect to the extensive variables} $q$ if
\bq
(q_0,q_1, \cdots,q_n, p_0, \cdots,p_n) \in \cL \Rightarrow (\mu q_0, \mu q_1, \cdots,\mu q_n, p_0, \cdots,p_n) \in \cL
\eq
for all $0 \neq \mu \in \mR$. 
\end{definition}
Using the same theory as exploited before for homogeneity with respect to the $p$-variables, cf. Proposition \ref{prop:liouville}, homogeneity of $\cL$ with respect to $q$ is equivalent to the vector field $W:=\sum_{i=0}^n q_i \frac{\partial}{\partial q_i}$ being {\it tangent} to $\cL$. Hence, using the same argumentation as in Proposition \ref{prop:liouville}, not only the Liouville form $\alpha = \sum_{i=0}^n p_i dq_i$ is zero on $\cL$, but also the one-form
\bq
\label{gdg}
\beta:=  \sum_{i=0}^n q_i dp_i
\eq
This could be called the {\it generalized} Gibbs-Duhem relation.
\begin{proposition}
The Liouville submanifold $\cL$ is homogeneous with respect to the extensive variables $q$ if and only if $\beta=  \sum_{i=0}^n q_i dp_i$ is zero on $\cL$. Let $\cL$ have generating function $-p_0\widehat{F}(q_I,\gamma_J)$ for some partitioning $\{1,\cdots,n\}=I \cup J$. Then $\cL$ is homogeneous with respect to the extensive variables $q$ if and only if if $I$ is non-empty and $\widehat{F}(q_I,\gamma_J)$ is homogeneous of degree $1$ in $q_I$. Furthermore, if $\cL$ is homogeneous with respect to the extensive variables $q$, then
\bq
\label{sumi}
\sum_{i=0}^n q_i p_i =0, \quad \mbox{ for all } (q,p) \in \cL
\eq
\end{proposition}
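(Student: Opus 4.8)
The plan is to prove the three assertions in succession, reducing everything to the vanishing on $\cL$ of the one-form $\beta$ of \eqref{gdg}. For the first equivalence, the text already observes that homogeneity of $\cL$ in $q$ is the same as tangency of $W:=\sum_{i=0}^{n}q_i\,\partial/\partial q_i$ to $\cL$. The remaining ingredient is the identity $i_W\omega=-\beta$, immediate from $\omega=\sum_{i=0}^{n}dp_i\wedge dq_i$: since $\cL$ is Lagrangian, for every $X$ tangent to $\cL$ we get $\beta(X)=-\omega(W,X)$, so tangency of $W$ gives $\beta|_{\cL}=0$, and conversely $\beta|_{\cL}=0$ makes $\omega(W,\cdot)$ vanish on $T\cL$, forcing $W$ to be tangent to $\cL$ by maximality of the isotropic subspace --- exactly the pattern of the proof of Proposition~\ref{prop:liouville}. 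This settles ``$\cL$ homogeneous in $q$ $\iff$ $\beta|_{\cL}=0$''.

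For the generating-function characterization I would work in the chart on $\cL$ with free coordinates $(q_I,p_0,p_J)$ coming from \eqref{21}: there $q_0=-\partial F/\partial p_0$, $q_\ell=-\partial F/\partial p_\ell$ for $\ell\in J$, $p_i=\partial F/\partial q_i$ for $i\in I$, with $F=-p_0\widehat F(q_I,\tfrac{p_J}{-p_0})$ as in \eqref{22}. Substituting these relations and $dp_i=d(\partial F/\partial q_i)$ $(i\in I)$ into $\beta=\sum_{i=0}^{n}q_i\,dp_i$ and collecting the coefficients of $dq_k$ $(k\in I)$, of $dp_0$, and of $dp_\ell$ $(\ell\in J)$, a routine chain-rule computation identifies these coefficients with $\partial H/\partial q_k$, $\partial H/\partial p_0$, $\partial H/\partial p_\ell$ respectively, where
\[
H\ :=\ \sum_{i\in I}q_i\,\frac{\partial F}{\partial q_i}\ -\ F
\]
is regarded as a function on $\cL$ via these coordinates; hence $\beta|_{\cL}=dH$. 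Since $F$ is homogeneous of degree $1$ in $p$ (the remark after \eqref{22}), so is $H$; therefore $dH=0$ forces $H$ to be locally constant on $\cL$, and a function homogeneous of degree $1$ in $p$ that is constant along a ray $\{(q,\lambda p):\lambda>0\}\subset\cL$ (such rays lie in $\cL$ by Proposition~\ref{prop:liouville}) must vanish identically, because $\lambda H=H$ gives $H=0$. Thus $\beta|_{\cL}=0\iff H\equiv0\iff\sum_{i\in I}q_i\,\partial F/\partial q_i=F$, which by Euler's homogeneous function theorem (Theorem~\ref{Euler}), applied in the $q_I$-variables, says precisely that $F$ --- equivalently $\widehat F=-F/p_0$ --- is homogeneous of degree $1$ in $q_I$. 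This argument also exhibits why $I$ must be non-empty: for $I=\emptyset$ one has $H=-F$, and $\beta|_{\cL}=0$ would force the degenerate case $F\equiv0$.

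For the last claim I would simply note $d\!\big(\sum_{i=0}^{n}q_ip_i\big)=\alpha+\beta$; on the Liouville submanifold $\cL$ we have $\alpha|_{\cL}=0$ by definition and $\beta|_{\cL}=0$ by the first assertion (since $\cL$ is homogeneous in $q$), so $\sum_{i=0}^{n}q_ip_i$ is locally constant on $\cL$, and being homogeneous of degree $1$ in $p$ it must be identically zero there by the ray argument used above --- which is \eqref{sumi}. The main obstacle in all of this is purely computational: the bookkeeping in the middle paragraph, i.e.\ keeping track on $\cL$ of which variables are free and which are constrained while differentiating $F$; everything else is a formal combination of Cartan calculus, Proposition~\ref{prop:liouville}, and Euler's theorem.
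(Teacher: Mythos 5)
Your proof is correct. For the first equivalence and for \eqref{sumi} you follow essentially the paper's own route: the paper reduces the first statement to tangency of $W$ ``by the same reasoning as in Proposition~\ref{prop:liouville}'', and your identity $i_W\omega=-\beta$ combined with maximality of the Lagrangian tangent space is exactly that reasoning made explicit; likewise the paper obtains \eqref{sumi} from $d\bigl(\sum_i q_ip_i\bigr)=\alpha+\beta$ together with tangency of $Z$, which is your ray argument. Where you genuinely diverge is the generating-function characterization. The paper disposes of it by checking directly that the defining equations \eqref{20} are invariant under $q\mapsto\mu q$ exactly when $\widehat F$ is homogeneous of degree $1$ in $q_I$ (with failure for $I=\emptyset$). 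You instead compute in the chart $(q_I,p_0,p_J)$ that $\beta|_{\cL}=dH$ with $H=\sum_{i\in I}q_i\,\partial F/\partial q_i-F$ the Euler defect (this computation checks out: the $dp_0$ and $dp_J$ terms of $\beta|_{\cL}$ assemble to $-dF+\sum_{k\in I}(\partial F/\partial q_k)\,dq_k$, and the $q_k\,dp_k$ terms to $d\bigl(\sum_k q_k\partial F/\partial q_k\bigr)-\sum_k(\partial F/\partial q_k)\,dq_k$), then use $p$-homogeneity of $H$ and the rays contained in $\cL$ to upgrade $dH=0$ to $H\equiv0$, and conclude by Euler's theorem in the $q_I$-variables. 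This costs some bookkeeping but buys uniformity --- the entire proposition hangs on the single condition $\beta|_{\cL}=0$ --- and it isolates the $I=\emptyset$ case cleanly as $H=-F$, hence $F\equiv0$ (the fiber over the origin), the degenerate situation the paper excludes tacitly. Both arguments are sound.
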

\begin{proof}
As mentioned above, the first statement follows from the same reasoning as in Proposition \ref{prop:liouville}, swapping the $p$ and $q$ variables. Equivalence of homogeneity of $\cL$ with respect to $q$ to $\widehat{F}(q_I,\gamma_J)$ being homogeneous of degree $1$ in $q_I$ directly follows from the expression of $\cL$ in \eqref{20} in case $I \neq \emptyset$, while clearly homogeneity of $\cL$ fails if $I = \emptyset$. Finally, if both $\alpha=\sum_{i=0}^n p_i dq_i$ and $\beta=\sum_{i=0}^n q_i dp_i$ are zero on $\cL$, then $d (\sum_{i=0}^n q_i p_i)$ is zero on $\cL$. Hence $\sum_{i=0}^n q_i p_i$ is constant on $\cL$. Since $Z=\sum_{i=0}^n p_i \frac{\partial}{\partial p_i}$ is tangent to $\cL$ necessarily this constant is zero.
\end{proof}
\begin{remark} In a contact-geometric setting, an identity similar to \eqref{sumi} was noticed in \cite{hoang}. A related scenario, explored in \cite{brayton}, is the case that $\cL$ is a Lagrangian submanifold which is {\it non-mixing}: there exists a partitioning $\{0,1,\cdots n\}$ $= I \cup J$ such that $q_J=q_J(q_I), \, p_I=p_I(p_J)$ for all $(q_I,q_J,p_I,p_J) \in \cL$.
Then $\cL$ being Lagrangian amounts to
\bq
\frac{\partial q_J}{\partial q_I} = - \left(\frac{\partial p_I}{\partial p_J}\right)^\top
\eq
Since the left-hand side only depends on $q_I$ and the right-hand side only on $p_J$, this means that both sides are {\it constant}, implying that $q_J=Aq_I, p_I= - A^\top p_J$ for some matrix $A$. Hence $\cL$ is obviously satisfying \eqref{sumi}, and is actually the product of two orthogonal {\it linear} subspaces; one in $\Q=\mR^{n+1}$ and the other in the dual space $\Q^*=\mR^{n+1}$.
\end{remark}
Homogeneity of $\cL$ with respect to $q$ has the following classical implication. Consider again the case of a generating function $F(q,p)=-p_0\widehat{F}(q_1, \cdots, q_n)$ for $\cL$, with $q_0$ being the energy variable. Since $\widehat{F}$ is homogeneous of degree $1$ we may define for $q_1 \neq 0$
\bq
\label{eps}
\bar{F}(\epsilon_2, \cdots,\epsilon_n):= \widehat{F}(1,\frac{q_2}{q_1}, \cdots, \frac{q_n}{q_1})= \frac{1}{q_1}\widehat{F}(q_1,\cdots,q_n), \; \epsilon_j:=\frac{q_j}{q_1}, \; j=0,2, \cdots,n
\eq
Equivalently, $\widehat{F}(q_1,\cdots,q_n)= q_1 \bar{F}(\epsilon_2, \cdots,\epsilon_n)$, where the function $\bar{F}$ is known as the {\it specific energy} \cite{kondepudi}. 

Geometrically this means the following. By homogeneity with respect to the $p$-variables the Liouville submanifold $\cL \subset \T^*\mR^{n+1} $ is projected to the Legendre submanifold $\widehat{\cL} \subset \mR^{n+1} \times \mP(\mR^{n+1})$, where $\mP(\mR^{n+1})$ is the $n$-dimensional projective space. Subsequently, by homogeneity with respect to the $q$-variables $\widehat{\cL} \subset \mR^{n+1} \times \mP(\mR^{n+1})$ is projected to a submanifold $\bar{\cL} \subset \mP(\mR^{n+1})\times \mP(\mR^{n+1})$. In coordinates the expression of $\bar{\cL}$ is given as follows. Start from the expression of $\widehat{\cL}$ as given in \eqref{20}. Using the identities
\bq
\begin{array}{l}
q_0=q_1 \bar{F}(\epsilon_2, \cdots, \epsilon_n) \Leftrightarrow \epsilon_0 = \bar{F}(\epsilon_2, \cdots, \epsilon_n) \\[2mm]
\gamma_1=\frac{\partial \widehat{F}}{\partial q_1} = \bar{F}(\epsilon_2, \cdots, \epsilon_n) - q_1 \sum_{\ell =2}^n \frac{\partial \bar{F}}{\partial \epsilon_{\ell}} \frac{q_{\ell}}{q_1^2} = \bar{F}(\epsilon_2, \cdots, \epsilon_n) - \sum_{\ell =2}^n \epsilon_{\ell}\frac{\partial \bar{F}}{\partial \epsilon_{\ell}} \\[2mm]
\gamma_j = \frac{\partial \widehat{F}}{\partial q_j} = \frac{\partial (q_1\bar{F})}{\partial q_j}= \frac{\partial \bar{F}}{\partial \epsilon_j}, \quad j=2, \cdots,n
\end{array}
\eq
the description \eqref{20} amounts to
\bq
\begin{array}{l}
\bar{\cL}= \{(\epsilon_0,\epsilon_2, \cdots, \epsilon_n, \gamma_1,\cdots, \gamma_n) \mid \epsilon_0 = \bar{F}(\epsilon_2, \cdots, \epsilon_n), \\[2mm]
 \gamma_1=\bar{F}(\epsilon_2, \cdots, \epsilon_n) -  \sum_{\ell =2}^n \epsilon_\ell \frac{\partial \bar{F}}{\partial \epsilon_\ell}, \, \gamma_2= \frac{\partial \bar{F}}{\partial \epsilon_2}, \cdots, \gamma_n=\frac{\partial \bar{F}}{\partial \epsilon_n} \},
 \end{array}
 \eq
where 
\bq
F(q,p)= -p_0\widehat{F}(q)= -p_0q_1 \bar{F}(\epsilon_2, \cdots, \epsilon_n), \quad \epsilon_j:=\frac{q_j}{q_1}, \; j=0,2, \cdots,n
\eq
Similar expressions hold in the general case that the generating function for $\widehat{\cL}$ is given by $\widehat{F}(q_I,\gamma_J)$ for some partitioning $\{1, \cdots,n \} =I \cup J$.
%
\medskip

Furthermore, if the state properties captured by $\cL$ are homogeneous with respect to $q$, it is natural to require the dynamics to be homogeneous with respect to $q$ as well. Thus one requires the Hamiltonian $K(q,p)$ governing the dynamics to be homogeneous of degree $1$, not only with respect to $p$, but also with respect to $q$, i.e.,
\bq
K(\mu q,p)= \mu K(q,p), \quad \mbox{ for all } 0 \neq \mu \in \mR
\eq
Equivalently (analogously to Proposition \ref{prop:homham}) one requires $X_K$ to satisfy 
\bq
\mL_{X_K}\beta=0
\eq
Similarly to Proposition \ref{prop:projection}, this implies
\bq
[X_K,W]=0, \quad W=\sum_{i=0}^n q_i \frac{\partial}{\partial q_i}
\eq
Hence the flow of $X_K$ commutes both with the flow of the Euler vector field $Z=\sum_{i=0}^n p_i \frac{\partial}{\partial p_i}$ and with the vector field $W=\sum_{i=0}^n q_i \frac{\partial}{\partial q_i}$. 

We have seen before that projection along $Z$ yields the {\it contact vector field} $X_{\widehat{K}}$, with $K(q,p)= -p_0\widehat{K}(q,\gamma), \, \gamma_j= \frac{p_j}{-p_0}, j=1, \cdots,n$, where $(q,\gamma) \in \mR^{n+1} \times \mP(\mR^{n+1})$. 
Subsequent projection along $W$ to the reduced space $\mP(\mR^{n+1})\times \mP(\mR^{n+1})$ can be computed as follows. First write as above
\bq
\widehat{K}(q,\gamma)=q_1 \bar{K}(\epsilon,\gamma), \quad \epsilon_j= \frac{q_j}{q_1}, \quad j=0, 2, \cdots,n
\eq
Then compute, analogously to \eqref{identities},
\bq
\begin{array}{ll}
\frac{\partial \widehat{K}}{\partial q_1} = \bar{K} - \sum_{\ell=0,2}^n \epsilon_\ell \frac{\partial \bar{K}}{\partial \epsilon_\ell} & \\[2mm]
\frac{\partial \widehat{K}}{\partial q_j} = \frac{\partial \bar{K}}{\partial \epsilon_j}, \quad & j=0,2 \cdots,n \\[2mm]
\frac{\partial \widehat{K}}{\partial \gamma_j}=q_1 \frac{\partial \bar{K}}{\partial \gamma_j}, \quad & j=1, \cdots,n 
\end{array}
\eq
Combining, analogously to \eqref{dot}, with the expression
\bq
\dot{\epsilon}_j = \frac{\dot{q}_j}{q_1} - \frac{q_j}{q_1^2}\dot{q}_1,
\eq
this yields the following $2n$-dimensional dynamics on the {\it reduced thermodynamic phase space} $\mP(\mR^{n+1}) \times \mP(\mR^{n+1})$
\bq
\label{projected}
\begin{array}{rcll}
\dot{\epsilon}_j & = & \frac{\partial \bar{K}}{\partial \gamma_j} - \epsilon_j \left(\sum_{\ell=1}^n \gamma_\ell \frac{\partial \bar{K}}{\partial \gamma_\ell} - \bar{K} \right) , \quad &  j=0,2, \cdots, n\\[3mm]
\dot{\gamma}_j & = & - \frac{\partial \bar{K}}{\partial \epsilon_j} + \gamma_j \left(\sum_{\ell=0,2}^n \epsilon_\ell \frac{\partial \bar{K}}{\partial \epsilon_\ell} - \bar{K} \right), & j=1,2 \cdots, n,
\end{array}
\eq
where $\bar{K}$ is determined by 
\bq
K(q,p) = - p_0q_1 \bar{K}(\epsilon,\gamma), \quad \epsilon=\left(\frac{q_0}{q_1},\frac{q_2}{q_1} \cdots, \frac{q_n}{q_1}\right), \; \gamma= \left(\frac{p_1}{-p_0}, \cdots, \frac{p_n}{-p_0}\right)
\eq
Obviously, if $q_0$ represents {\it entropy} the same expressions hold with different interpretation of $\epsilon_0, \epsilon_2, \cdots, \epsilon_n$.

Note that the $2n$-dimensional dynamics \eqref{projected} consists of standard Hamiltonian equations with respect to the Hamiltonian $\bar{K}$, together with extra terms. In view of \eqref{constraints-contact}, the first part of these extra terms for the autonomous term $\bar{K}^a$, i.e., $\sum_{\ell=1}^n \gamma_\ell \frac{\partial \bar{K}^a}{\partial \gamma_\ell} - \bar{K}^a$, is zero on $\cL$.

As a final remark it can be noted that while the above reduction from $\cL$ and $X_K$ to $\bar{\cL}$ and the dynamics \eqref{projected} was done via $\widehat{\cL}$ and $X_{\widehat{K}}$ (the contact-geometric description on the thermodynamic phase space), the same outcome is obtained by instead {\it first} projecting onto $\mP(\mR^{n+1}) \times \mR^{n+1}$ along $W$, and {\it then} projecting onto $\mP(\mR^{n+1}) \times \mP(\mR^{n+1})$ along $Z$. Said otherwise, this alternative route involves a {\it different} intermediate contact geometric description on the contact manifold $\mP(\mR^{n+1}) \times \mR^{n+1}$ with coordinates $\epsilon_0,\epsilon_2, \cdots, \epsilon_n,p_0,\cdots,p_n$.  

\section{Conclusions}
The geometric formulation of classical thermodynamics gives rise to a specific branch of symplectic geometry, coined as {\it Liouville geometry}, which is closely related to contact geometry. A detailed treatment of Liouville submanifolds and their generating functions has been provided. The same has been done for homogeneous Hamiltonian vector fields, extending the treatment in e.g. \cite{arnold, arnold-contact, libermann}. For the formulation of the Weinhold and Ruppeiner metrics in this setting we refer to \cite{entropy}. 
The interpretation of the resulting Hamiltonian formulation of port-thermodynamic systems turns out to be rather different from Hamiltonian formulations of other parts of physics, such as mechanics. In particular, the state properties of the thermodynamic system define a Liouville submanifold, which is left invariant by the Hamiltonian dynamics. Furthermore, the Hamiltonian is dimensionless, while its corresponding contact Hamiltonians have dimension of power (energy representation) or entropy flow (entropy representation). An open modeling problem concerns the determination of the Hamiltonian governing the dynamics. A partial answer is given in \cite{entropy}, where it is shown how the Hamiltonian of a thermodynamic system can be derived from the Hamiltonians of the constituent thermodynamic subsystems. In Section \ref{sec:homex} another type of homogeneity has been considered; this time with respect to the extensive variables, corresponding to the classical Gibbs-Duhem relation. It has been shown how this gives rise to a further projected dynamics on the product of the $n$-dimensional projective space with itself. The precise geometric interpretation and properties of the reduced dynamics \eqref{projected} deserve further study.

\section*{Acknowledgements}
\noindent
I thank Bernhard Maschke, Universit\'e de Lyon-1, France, for ongoing collaborations that stimulated the writing of the present paper.

\end{document}